\newcommand{\approptoinn}[2]{\mathrel{\vcenter{
  \offinterlineskip\halign{\hfil$##$\cr
    #1\propto\cr\noalign{\kern2pt}#1\sim\cr\noalign{\kern-2pt}}}}}
\begin{document}
\title{Solving Free Fermion Problems on a Quantum Computer}
\author{Maarten Stroeks}
\email[]{m.e.h.m.stroeks@tudelft.nl}
\affiliation{QuTech, TU Delft,  Lorentzweg 1, 2628 CJ Delft, The Netherlands}
\affiliation{Delft Institute of Applied Mathematics, TU Delft,  2628 CD Delft,  The Netherlands}
\author{Daan Lenterman}
\affiliation{Department of Physics, ETH Z\"urich, CH-8093 Z\"urich, Switzerland}
\author{Barbara M. Terhal}
\affiliation{QuTech, TU Delft,  Lorentzweg 1, 2628 CJ Delft, The Netherlands}
\affiliation{Delft Institute of Applied Mathematics, TU Delft,  2628 CD Delft,  The Netherlands}
\author{Yaroslav Herasymenko}
\email[]{yaroslav@cwi.nl}
\affiliation{Delft Institute of Applied Mathematics, TU Delft,  2628 CD Delft,  The Netherlands}
\affiliation{QuSoft and CWI, Science Park 123,  1098 XG Amsterdam, The Netherlands}
\newtheorem{theorem}{Theorem}
\newtheorem{corollary}[theorem]{Corollary}
\newtheorem{lemma}[theorem]{Lemma}
\newtheorem{proposition}[theorem]{Proposition}
\newtheorem{claim}[theorem]{Claim (informal)}
\newtheorem{problem}[theorem]{Problem}
\newtheorem{criterion}{Criterion}
\newtheorem{appendixcriterion}{Criterion}[section]
\theoremstyle{definition}
\newtheorem{assumption}{Assumption}
\theoremstyle{remark}
\newtheorem*{remark}{Remark}
\theoremstyle{definition}
\newtheorem{definition}[theorem]{Definition}
\Crefname{criterion}{Criterion}{Criteria}
\crefname{criterion}{Crit.}{Crit.}

\begin{abstract}
Simulating noninteracting fermion systems is a common task in computational many-body physics. In absence of translational symmetries, modeling free fermions on $N$ modes usually requires ${\rm poly}(N)$ computational resources. While often moderate, these costs can be prohibitive in practice when large systems are considered.
\textcolor{black}{We present several free-fermion problems that can be solved by a quantum algorithm with substantially reduced computational costs. The memory costs are exponentially improved, ${\rm poly}\,{\rm log}(N)$. The runtime improvement, compared to the best known classical algorithms, is either exponential or significantly polynomial, depending on the geometry of the problem. The simulation of free-fermion dynamics belongs to the BQP-hard complexity class. This implies (under standard assumptions) that our algorithm yields an exponential speedup for \textit{any} classical algorithm at least for some geometries.} The key technique in our algorithm is the block-encoding of objects such as correlation matrices and Green's functions into a unitary. We demonstrate how such unitaries can be efficiently realized as quantum circuits, in the context of dynamics and thermal states of tight-binding Hamiltonians. The special cases of disordered and inhomogeneous lattices, as well as large non-lattice graphs, are presented in detail. Finally, we show that our simulation algorithm generalizes to other promising targets, including free boson systems.
\end{abstract}
\maketitle

\clearpage 

\section{Introduction and background} Quantum many-body dynamics can be naturally simulated by a quantum computer \cite{LLoyd1996}, enabling its applications in condensed matter and quantum chemistry.
For a system of size $N$, standard quantum algorithms use ${\rm poly}(N)$ resources for such simulations. It implies an exponential advantage over classical methods, when dealing with a generic many-body system. Such a general advantage may not hold in special cases of interest, such as the modeling of free fermions, where the best classical algorithms also have ${\rm poly}(N)$ cost \cite{TD:freefermion,knill2001,bravyi:FLO}. This classical efficiency has been key to many successes of computational physics, as free fermions model a variety of systems in condensed matter and quantum chemistry; they have also been used in computational strategies for solving interacting fermion systems, using mean-field (Hartree-Fock), perturbative methods or dynamical mean-field theory. Nonetheless, in the practical simulations of noninteracting fermions, even the most efficient numerical methods become too expensive for large systems. This motivates the key question of this work: can a quantum computer boost free-fermion simulations beyond what can be done classically? We answer this question in the affirmative, presenting quantum algorithms with an exponential speedup and memory compression for several free-fermion problems.

To appreciate the value of such an exponential reduction, consider numerical simulations of free-fermion models of materials and interfaces for quantum transport \cite{Groth_2014, Kloss_2021}. These can become prohibitive when involving more than $N=10^9$ modes, which is of practical interest when simulating $3$-dimensional lattice models. Upon compression, a system of $10^9$ sites can be described by $n=30$ qubits. Larger systems of practical interest could still be accessed with moderately sized quantum computers. Indeed, even simulating one mole ($N\simeq 10^{24}$) of fermionic modes requires fewer than $n=80$ qubits in compressed form. This opens the door to modeling free fermions near the thermodynamic limit --- a desired but often challenging goal.

Our result is based on an understanding of the reduced classical complexity of free-fermion systems. As an inspiration, we used the fact that the matchgate computations and the dynamics of free fermion problems on $N=2^n$ modes 
can be simulated in compressed form, using $O(n)$ \textit{space} on a quantum computer \cite{JKMW,Kraus_2011, supremeFF,senjean2023toward,GBC}. In this work, we go beyond these memory compression results to identify free fermion problems that also permit an exponentially improved, ${\rm poly} (n)$ quantum \textit{run-time}. Our key idea is to represent the relevant $2^n$-sized object---such as the correlation matrix or a Green's function of a free-fermion state---as a block of an $n$-qubit unitary. This unitary can be given as an efficient quantum circuit; we provide explicit construction methods by leveraging the modern quantum algorithm toolbox of block-encoding manipulations \cite{berry2015hamiltonian, Low_2019, Gilyen_2019, lin2022lecturenotesquantumalgorithms, Rall_2020}. In particular, we show how to construct the desired unitary for free-fermion states arising from time dynamics or thermal equilibrium. Given block-encodings of the aforementioned objects into a circuit, we show how to accurately extract various physical quantities for a state, including the occupation number on a given site, or energy density across the entire system. We analyze the application of our methods to free-fermion models on $d$-dimensional lattices and expander graphs. For the particular case of $d$-dimensional lattices, we argue that a polynomial runtime improvement can be expected, based on the comparison with best available classical algorithms. For expander graphs, the same analysis suggests an \textit{exponential} speedup.
On a more general geometry, the problem of single-particle time dynamics is BQP-hard \cite{babbush2023exponential} --- as hard as any problem that can be efficiently solved by a quantum computer. This rigorously proves that our approach offers an exponential quantum speedup at least for some geometries (as long as quantum computers can offer exponential speedups in principle). Finally, we outline the generalization of our approach to systems beyond free fermions.

Our work can be viewed as a fermionic counterpart to \cite{babbush2023exponential}, which shows how the time-dynamics of a system of coupled oscillators can be solved exponentially faster on a quantum versus a classical computer --- with further applications in \cite{danzetal:oscillators}. Compared to the alternative and recent work \cite{GoogleShadow} which focuses on encoding a correlation matrix into a state, our method using block-encodings has an exponential advantage in signal strength for the extraction of local observables (see Appendix \ref{sec:alt_encodings} for more details). 

We note that quantum algorithms for compressed simulation of \textit{interacting} fermionic models have also been considered in e.g. \cite{StanisicMontanaro}, where a Fermi-Hubbard model is simulated in the $O(1)$-particle sub-space. By contrast, we consider $N$-mode systems with as many as $\Theta(N)$ particles.

\section{Preliminaries} Throughout this work, we set $N=2^n$. A particle-conserving free fermion Hamiltonian $H$ can be written as
\begin{equation}
    H=\sum_{i=0,j=0}^{N-1,N-1}h_{ij} a_j^{\dagger}a_i,
    \label{eq:defhmain}
\end{equation}
with Hermitian matrix $h$, which we will assume to be $O(1)$-sparse (i.e., there are at most a constant number of non-zero entries in each row) and $|h_{ij}|\leq 1$. Here $\{a_i^{\dagger},a_j\}=\delta_{ij}, \{a_i,a_j\}=\{a_i^{\dagger},a_j^{\dagger}\}=0$. We denote the fermionic particle number operator as $\hat{N}=\sum_{i=0}^{N-1} a_i^{\dagger}a_i$, and we restrict ourselves to Hamiltonians which preserve particle number \footnote{There are straightforward generalizations, using Majorana fermion language, to just parity-conserving free fermion Hamiltonians.}. We allow for states $\rho$ with an arbitrary number of particles ${\rm Tr}\, (\hat{N} \rho)$, which in general may scale with $N=2^n$. Observe that in the case of single-particle dynamics ${\rm Tr}\, (\hat{N} \rho)=1$, the fermionic nature of the system does not come into play and bosonic or fermionic dynamics are equivalent.

The Hermitian correlation matrix $M$ of a fermionic state $\rho$ on $N$ modes is defined as
\begin{equation}
   M_{ij}={\rm Tr}\, (a_i^{\dagger} a_j \rho) \in \mathbb{C}, 
   \label{eq:defMmain}
\end{equation}
and obeys $0 \leq M \leq I$, and ${\rm Tr}(M)=\langle \hat{N}\rangle$. $M$ contains observable information about the fermionic state $\rho$: for example, $M_{jj}$ is the mean fermion occupation number of a state $\rho$ in the mode $j$. Furthermore, an expectation value of a free fermion Hamiltonian (Eq.\,\eqref{eq:defhmain}) can be expressed as $\mathrm{Tr}\left(H\rho  \right)=\sum_{i,j}h_{ij}M_{ji}$. If $\rho$ is itself free-fermionic, expectation values of \textit{interacting} Hamiltonians can also be obtained from $M$, using Wick's theorem.  

Throughout this work, we will use $[N=2^n]$ in a non-traditional way, namely offset by 1: $[N]\equiv\{0,\ldots, N-1\}$. We also use the standard notation $f(x)=O(g(x))$ if a function is asymptotically upper bounded by $\mathrm{const}\cdot g(x)$, $f(x)=\Omega(g(x))$ if lower bounded, and $f(x)=\Theta(g(x))$ if both (i.e., scaling in the same way as $\mathrm{const}\cdot g(x)$).


\definecolor{e47fa927-aed0-5033-ae17-129d4d15d95c}{RGB}{241, 240, 233}
\definecolor{f3551e38-74df-57e2-b793-83d7fe876c85}{RGB}{0, 0, 0}
\definecolor{0b71a967-1f15-55a5-9bb9-70efa7b4fc58}{RGB}{51, 51, 51}
\tikzstyle{613e886f-9a38-5dd0-b8dc-e63bf7dad410} = [rectangle, rounded corners, minimum width=3cm, minimum height=1cm, text centered, font=\normalsize, color=0b71a967-1f15-55a5-9bb9-70efa7b4fc58, draw=f3551e38-74df-57e2-b793-83d7fe876c85, line width=1, fill=e47fa927-aed0-5033-ae17-129d4d15d95c]
\tikzstyle{2f8566b6-7e2c-5bfe-a65a-4b950ff4847a} = [rectangle, rounded corners, minimum width=3cm, minimum height=1cm, text centered, font=\normalsize, color=f3551e38-74df-57e2-b793-83d7fe876c85, draw=f3551e38-74df-57e2-b793-83d7fe876c85, line width=1, fill=e47fa927-aed0-5033-ae17-129d4d15d95c]
\tikzstyle{9617732b-28d6-5584-babe-0f80052099e0} = [rectangle, rounded corners, minimum width=3cm, minimum height=1cm, text centered, font=\normalsize, color=0b71a967-1f15-55a5-9bb9-70efa7b4fc58, draw=f3551e38-74df-57e2-b793-83d7fe876c85, line width=1, fill=e47fa927-aed0-5033-ae17-129d4d15d95c]
\tikzstyle{b0874d4a-ee06-54b9-a83e-c54ec9b1ace2} = [rectangle, rounded corners, minimum width=3cm, minimum height=1cm, text centered, font=\normalsize, color=0b71a967-1f15-55a5-9bb9-70efa7b4fc58, draw=f3551e38-74df-57e2-b793-83d7fe876c85, line width=1, fill=e47fa927-aed0-5033-ae17-129d4d15d95c]
\tikzstyle{caf4fe8a-33dc-547f-a2b9-0c6fc7af7cb8} = [rectangle, rounded corners, minimum width=4cm, minimum height=1cm, text centered, font=\normalsize, color=0b71a967-1f15-55a5-9bb9-70efa7b4fc58, draw=f3551e38-74df-57e2-b793-83d7fe876c85, line width=1, fill=e47fa927-aed0-5033-ae17-129d4d15d95c]
\tikzstyle{0205d17f-fb08-58b1-bb55-01711fd4389d} = [thick, draw=f3551e38-74df-57e2-b793-83d7fe876c85, line width=2, ->, >=stealth]
\tikzstyle{d025e20a-e5fd-5f48-9ef5-0b5d82a70869} = [thick, draw=f3551e38-74df-57e2-b793-83d7fe876c85, line width=2, ->, >=stealth]

\begin{figure*}[t]
\centering
\begin{adjustbox}{width=0.85\textwidth}
\begin{tikzpicture}[node distance=2cm]
\node (2cb74b6a-1132-4912-8a21-f2a2f875939d) [613e886f-9a38-5dd0-b8dc-e63bf7dad410] {Sparse access to h};
\node (d62dbbd2-af00-45fe-b12f-5ab04e0ee4f2) [2f8566b6-7e2c-5bfe-a65a-4b950ff4847a, right of=2cb74b6a-1132-4912-8a21-f2a2f875939d, xshift=2.2cm] {Block-encoding of $h$};
\node (0d35964c-0680-4a2d-bfb6-ec9d4cde3071) [9617732b-28d6-5584-babe-0f80052099e0, right of=d62dbbd2-af00-45fe-b12f-5ab04e0ee4f2, xshift=2.5cm] {Block-encoding of $F(h)$};
\node (fecc014b-c71b-4bfc-9f55-7997e5084e6f) [b0874d4a-ee06-54b9-a83e-c54ec9b1ace2, right of=0d35964c-0680-4a2d-bfb6-ec9d4cde3071, xshift=3.0cm] {Estimate entries of $U_{F(h)}$};
\node (eaa49b24-6e3a-46b3-a0dc-ed1b1e701770) [caf4fe8a-33dc-547f-a2b9-0c6fc7af7cb8, below of=d62dbbd2-af00-45fe-b12f-5ab04e0ee4f2] {Polynomial approximation of $F(x)$};
\node (48c97e8d-3524-4c99-bdc7-632b332e1fd3) [613e886f-9a38-5dd0-b8dc-e63bf7dad410, right of=eaa49b24-6e3a-46b3-a0dc-ed1b1e701770, xshift=4.5cm] {Hadamard test};
\draw [0205d17f-fb08-58b1-bb55-01711fd4389d] (2cb74b6a-1132-4912-8a21-f2a2f875939d) --  (d62dbbd2-af00-45fe-b12f-5ab04e0ee4f2);
\draw [d025e20a-e5fd-5f48-9ef5-0b5d82a70869] (d62dbbd2-af00-45fe-b12f-5ab04e0ee4f2) --  (0d35964c-0680-4a2d-bfb6-ec9d4cde3071);
\draw [d025e20a-e5fd-5f48-9ef5-0b5d82a70869] (0d35964c-0680-4a2d-bfb6-ec9d4cde3071) --  (fecc014b-c71b-4bfc-9f55-7997e5084e6f);
\draw [d025e20a-e5fd-5f48-9ef5-0b5d82a70869] (eaa49b24-6e3a-46b3-a0dc-ed1b1e701770) -|  (0d35964c-0680-4a2d-bfb6-ec9d4cde3071);
\draw [d025e20a-e5fd-5f48-9ef5-0b5d82a70869] (48c97e8d-3524-4c99-bdc7-632b332e1fd3) -|  (fecc014b-c71b-4bfc-9f55-7997e5084e6f);
\end{tikzpicture}
\end{adjustbox}
\caption{Overview of the proposed quantum computational method to extract properties of free fermionic systems such as the entries of matrices listed in Section \ref{sec:matrixfunctions}. The elements of the construction illustrated here are described in detail in Sections \ref{sec:block_encodings_general}-\ref{sec:observables}.}
\label{fig:overview}
\end{figure*}

\section{Outline} 

In Section \ref{sec:matrixfunctions}, we detail our objects of interest: correlation matrices for the time-evolved and thermal equilibrium states, as well as the Green's function matrix. Each of these objects carries physically meaningful information about the system, and has a form $F(h)$ --- an explicit matrix function of $h$. 

In Sections \ref{sec:block_encodings_general}-\ref{sec:observables} we demonstrate, how the information contained in these matrices can be efficiently extracted from a quantum computer, using the framework of so-called \textit{block-encodings}. Figure \ref{fig:overview} provides a visual scheme, illustrating the structure of our approach. Section \ref{sec:block_encodings_general} explains the block-encoding framework, namely how any $N\times N$ sized matrix $A$ can be encoded into a block of a unitary $U_A$ on $O(n)=O(\log N)$ qubits. We also review the basic tools to produce and manipulate such unitaries $U_A$, which were previously established in the literature. Given the matrix functions $F(h)$ of our interest, we will aim to produce the block-encodings $U_{F(h)}$ as compact quantum circuits. 


The starting point of our circuit construction are smaller unitaries which encode $h$ itself; these unitaries are called sparse access oracles (as the matrix $h$ is required to be sparse). In Section \ref{sec:oraclerealization}, we show how to to implement the sparse access oracles as quantum circuits of size $\mathrm{poly}\log N$. Such implementations are specific to the model of interest: we discuss the cases of $d$-dimensional lattice models and some expander graph geometries; we also demonstrate that quenched disorder can be introduced efficiently. 

In the following Section \ref{sec:blockencodingmatrixfunctions}, having implemented the sparse access oracles for $h$, we move to the second step of the construction --- realizing the block-encoding of matrix functions $F(h)$ of our interest. We detail how this can be done with quantum circuits of size that scales polynomially in $\log N$, as well as polynomially in parameters of the respective function, such as the evolution time $t$, the inverse temperature $\beta$, or the Green's function regularization parameter $\delta^{-1}$. 

Being able to run a circuit which realizes the block-encoding of the matrix $F(h)$ is not sufficient for a successful simulation: one still needs an efficient method to extract physically relevant information from $F(h)$. Section \ref{sec:observables} shows how this can indeed be done, using a Hadamard test and basic sampling techniques. In particular, we demonstrate that the local observables and global densities (such as the total energy density) can be accurately extracted from a block encoding of $F(h)$, while maintaining the $\mathrm{poly}\log N$ complexity of the algorithm.


Sections \ref{sec:complexity} and \ref{sec:speedup} deal with a crucial question: does our approach provide a significant speedup compared to a classical computation? This question can be answered in the affirmative from two perspectives. In Section \ref{sec:complexity}, we take a complexity theory perspective and observe that simulating free-fermionic time dynamics using $\log N$ qubits is BQP-hard. In other words, for a classical computer it is strictly as hard as simulating a general quantum computation on $\log N$ qubits --- which is widely assumed to be exponentially hard in the number of qubits. This establishes that our approach yields an exponential quantum speedup for at least some system geometries. In Section \ref{sec:speedup}, we take a more practical perspective, and focus on the geometries of direct physical interest (such as those given in Section \ref{sec:oraclerealization}). For these models, we compare the performance of our algorithm with the best classical algorithms which are currently available. 
In particular, we find that the quantum algorithm yields a power $(d+1)$ polynomial speedup when simulating the time dynamics of $d$-dimensional lattice models. For simulations of the expander models, we demonstrate an exponential quantum speedup.

We close the main text with the Section \ref{sec:generalizations}, where we sketch how our approach can be generalized to other systems, such as free fermions with pairing terms $(\sim \Delta a_j a_k)$ and free bosons with particle conservation. In Section \ref{sec:discussion}, we discuss the future directions.

\section{Objects of interest} 
\label{sec:matrixfunctions}

We consider three kinds of target objects --- matrix functions of $h$, whose entries encode the physically relevant information.

\begin{itemize}
    \item Correlation matrices $M^{(\beta)}$ of thermal states $\rho_{\beta}=e^{-\beta H}/{\rm Tr}(e^{-\beta H})$ associated with free-fermion Hamiltonians $H$:
    \begin{equation}
    \label{eq:fermi_dirac_main}
        M^{(\beta)}=\frac{I}{I+e^{\beta h}}.
    \end{equation}
    The eigenvalues $n_{\beta}(\epsilon_i)=(1+e^{\beta \epsilon_i})^{-1}$ of $M^{(\beta)}$ correspond to the Fermi-Dirac distribution, with $\epsilon_i$ the eigen-energies of $h$, and $\langle \hat{N}\rangle_{\beta}=\sum_i n_{\beta}(\epsilon_i)$. Note that $h$ here includes a chemical potential term $-\mu \mathbb{I}$, if needed. 
    
    \item Correlation matrices $M(t)$ of time-evolved states $\rho(t)$ (where the time evolution of $\rho(0)$ is under a free-fermion Hamiltonian $H$):
    \begin{equation}
        M(t)=e^{iht}M e^{-iht}, 
    \label{eq:timeevolvedcorrelationmatrix}
    \end{equation}
    with $M$ denoting the correlation matrix of $\rho(0)$. 

    In fact, we will consider a slightly more general object:
    \begin{equation}
        M(t_1,t_2)=e^{iht_1}M e^{-iht_2}, 
    \label{eq:green2}
    \end{equation}
    the entries of which correspond to 
    \begin{equation}
        M_{ij}(t_1,t_2)={\rm Tr}\,(a_i^{\dagger}(t_1) a_j(t_2)\rho),
    \label{eq:greenmain}
    \end{equation}
    with Heisenberg operators $a_i^{\dagger}(t),a_j(t)$ w.r.t. the free-fermion Hamiltonian $H$. 

    Note that for a Hamiltonian $H=H_0+V$ with free-fermionic $H_0$ and interacting perturbation $V$, after applying $U(t)=e^{-i H t}$ to an initial free-fermionic state $\rho$, observables involving creation and annihilation operators can be obtained from $M(t_1,t_2)$ in Eq.~\eqref{eq:green2}. This can be done via a perturbative expansion of $U(t)=e^{-i H t}$ and using Wick's theorem.
    
    \item The Green's function (in the frequency domain) w.r.t. a thermal state $\rho_{\beta}$ of a free-fermion Hamiltonian:
    \begin{multline}
        G^{(\delta,\beta,\omega)}(h) = \frac{\delta}{2}\bigg[\Big( 1-\frac{1}{1+\exp(\beta h)} \Big) \frac{1}{i\delta - (h+\omega)} \\ + \Big( \frac{1}{1+\exp(\beta h)} \Big) \frac{-1}{i\delta + (h+\omega)}\bigg],
    \label{eq:greensomega_main}
    \end{multline}
    with $\delta > 0$ a regularization parameter. 

    $G^{(\delta,\beta,\omega)}(h)$ is a Fourier transform of the time-domain Green's function, the entries of which are given by (here we use time-ordering unlike in Eq.~\eqref{eq:greenmain}):
    \begin{align}
    G_{ij}&(t_1,t_2) = \begin{cases}
      i{\rm Tr}\big( a_{i}^{\dagger}(t_1)a_{j}(t_2)\rho_{\beta} \big), & \text{for }t_1\geq t_2, \\
      -i{\rm Tr}\big( a_{j}(t_2)a_{i}^{\dagger}(t_1)\rho_{\beta} \big), & \text{for }t_1 < t_2,
      \end{cases} \nonumber \\ 
      =&\: \begin{cases}
      \big( ie^{ih(t_1-t_2)}\frac{1}{1+\exp(\beta h)}\big)_{ij}, & \text{for }t_1\geq t_2, \\
      \big(-ie^{ih(t_1-t_2)}\big(1-\frac{1}{1+\exp(\beta h)}\big)\big)_{ij}, & \text{for }t_1 < t_2.
    \end{cases}       
    \end{align}
    The regularization parameter $\delta$ in Eq. \eqref{eq:greensomega_main} ensures that the Fourier transform converges in the case of an isolated system, but can also model interactions with a bath at finite temperature \cite{Altland2010Condensed}. 
\end{itemize}

\section{Block-encodings} 
\label{sec:block_encodings_general}
Let us consider encoding a Hermitian $(N \times N)$-dimensional matrix $A$ into a block of an $n+m$ qubit unitary $U_A$. In general, an $n$-qubit matrix $A$ is said to be block-encoded into $U_A$ if it is equal to the block of $U_A$ where $m$ qubits are in a trivial state, with some coefficient $\alpha$
\begin{align}
    A_{ij}=\alpha \bra{i}_n\bra{0}_m U_A\ket{j}_n\ket{0}_m.
\label{eq:Mij_block_encoding_main}
\end{align}
Here, the matrix indices $i,j\in [N]$ are interpreted as bit-strings of length $n$. The coefficient $\alpha\geq 1$ arises from the fact that $\|U_A\|= 1$ while $A$ is arbitrary. If $\|A\|\leq 1$, we can take $\alpha = 1$. For a useful block-encoding, the coefficient $\alpha$ should not blow up beyond $\mathrm{poly}\log N$. Fortunately, in the applications considered in this work, $\alpha$ will remain a small constant. For the same reasons of maintaining efficiency, we will limit the number of ancillary qubits $m$ to $O(\log N)$. 

We will also allow block-encoding with error $\varepsilon$, the deviation in operator norm between $A$ and $\alpha \bra{0}_m U_A \ket{0}_m$. 
\begin{definition}
    For a matrix $A$ on $n$ qubits and $\alpha, \varepsilon \in \mathbb{R}_+$, an $(m+n)$-qubit unitary $U_{A}$ is an $(\alpha,m,\varepsilon)$-block-encoding of $A$, if
    \begin{align}
        \| A-\alpha (\bra{0}^{\otimes m}\otimes \mathbbm{1}) U(\ket{0}^{\otimes m}\otimes \mathbbm{1})\|\leq \varepsilon.  
    \end{align}
    where $||\cdot||$ is the spectral norm.
    \label{def:block}
\end{definition} 

The quantum circuits that approximately block-encode the matrix functions $F(h)$ are built using elementary circuits $U_h$ that block-encode $h$. These latter block-encodings $U_h$, in turn, contain unitaries which realize so-called \textit{sparse query access} to $h$. To access an $s$-sparse matrix $h$, i.e. a matrix which has up to $s=O(1)$ nonzero entries in any row, we will use `oracle' unitaries $O_r$ and $O_a$ which produce the entries of $h$. The `row' oracle $O_r$ returns, for a given row $i$, all column indices where the matrix $h$ has nonzero entries. The `matrix entry' oracle $O_a$ returns the value of $h$ (given with $n_a$ bits) for a given row and column index. This way, entries of $h$ can be retrieved without explicit access to the $\Theta(2^n)$ nonzero entries of matrix $h$. Let us formally define the \textit{oracle tuple} $\mathcal{O}_{h}$ of a sparse matrix $h$ containing the row and matrix entry oracles, and also their inverses and controlled versions as follows.

\begin{definition}[Sparse Access Oracle Tuple $\mathcal{O}_{h}$]
\label{def:oracles}
    Sparse access for an $s$-sparse $2^{n}\times 2^{n}$ matrix $h$ is defined as 
    \begin{align}
        O_r \ket{i}\ket{0}^{\otimes s(n+1)} = &\: \ket{i}\ket{r(i,1)}\ket{r(i,2)}\ldots\ket{r(i,s)}, \nonumber \\ & \hspace{3.5cm} \forall i\in[2^n],  \nonumber \\
        O_a \ket{i}\ket{j}\ket{0}^{\otimes n_a} = &\: \ket{i}\ket{j}\ket{h_{ij}},~~\forall i,j\in [2^n],
        \label{eq:oracles_main}
    \end{align}
    where $r(i,k)$ is the index for the $k$th nonzero entry of the $i$th row of $h$. Let us now cover a few technicalities.
    $O_{r}$ is a matrix acting on $(s+1)(n+1)$ qubits, and so the first qubit of $\ket{i}$ is in $\ket{0}$.
    To accommodate rows with less than $s$ non-zero entries, one uses the following. If the $i$th row contains $s'<s$ non-zero entries, then the last $(s-s')(n+1)$ qubits are put in the state $\ket{1}\ket{k}$. Note that for states $\ket{r(i,1)}\ldots \ket{r(i,s')}$, the first qubit is in $\ket{0}$.
    $h_{ij}$ is the value of the $(i,j)$th entry of $h$, described by a bitstring with $n_{a}$ binary digits (we will assume this representation to be exact). $O_{a}$ is a matrix acting on $2n+n_{a}$-qubits.

    Furthermore, we define the controlled version of the above sparse access, consisting of
    \begin{align}
        C\text{-}O_r =&\: O_r\otimes \ket{1}\bra{1}_{a} + \mathbbm{1}\otimes \ket{0}\bra{0}_{a}, \nonumber \\
        C\text{-}O_a =&\: O_a\otimes \ket{1}\bra{1}_{a} + \mathbbm{1}\otimes \ket{0}\bra{0}_{a},
    \end{align}
    where each matrix now acts on an additional (ancillary) qubit $a$. We call the collection of six oracles $(O_{r},O_{a},C\text{-}O_{r},C\text{-}O_{a},O_{r}^{-1},O_{a}^{-1},C\text{-}O_{r}^{-1},C\text{-}O_{a}^{-1})$ the sparse access \textit{oracle tuple} $\mathcal{O}_{h}$ of $h$.
\end{definition}

The relation between this definition of the oracle tuple and another common definition is discussed in Appendix \ref{app:def} for completeness. 

Let us now present the following statements, relating the construction of the block-encoding of $h$ and that of polynomials of $h$. Note that these block-encoding constructions contain calls to oracles from the oracle tuple $\mathcal{O}_h$ in Definition \ref{def:oracles}. We shall use these results when constructing the block-encodings of our desired matrix functions. The following statements use Definition \ref{def:block}.

\begin{proposition}[Lemma 48 in \cite{Gilyen_2018_arxiv}]
    A $(s,n+3,\varepsilon_{BE_h})$-block-encoding of $h$, $U_h$ (and its controlled version) consists of $O(1)$ calls to oracles from $\mathcal{O}_{h}$ tuple, $O\big( n+\log^{5/2}(s^2/\varepsilon_{\text{BE}_h}) \big)$ elementary gates and $O\big( sn+n_a+\log^{5/2}(s^2/\varepsilon_{\text{BE}_h}) \big)$ ancillary qubits. Here $n_a$ denotes the number of bits with which the entries of $h$ are specified. 
\label{prop:gilyenUh}
\end{proposition}

\begin{proposition}[Theorem 31 in \cite{Gilyen_2019}]
    Let $p_{d}(x)$ denote a degree-$d$ polynomial s.t. $|p_d(x)|\leq 1/2$ for $x\in [-1,+1]$. Then, a $(1,n+5,4d\sqrt{\varepsilon_{BE_h}/s}+\delta)$-block-encoding of $p_d(h/s)$, $U_{p_d(h/s)}$, consists of $O((n+4)d)$ elementary gates, and at most $d$ calls to unitaries $U_h$, $U_h^{-1}$ or controlled-$U_h$. The classical description of this circuit can be obtained classically in poly$(d,\log(1/\delta))$ time.
\label{prop:gilyenUph}
\end{proposition}

\section{Sparse-access realization for physical systems} 
\label{sec:oraclerealization}
The starting point for our method is to realize the sparse access tuple $\mathcal{O}_h$ for the system Hamiltonian $h$, using efficient quantum circuits. In particular, we need circuit realizations for unitaries $O_r$ and $O_a$ (Eq.~\eqref{eq:oracles_main}); these in fact can be given as (reversible) classical circuits, as no entanglement generation is required. Then the controlled and inverse unitaries from $\mathcal{O}_h$ can also be obtained as efficient circuits (with a constant factor overhead), controlling or inverting the circuits for $O_r$ and $O_a$ gate-by-gate. Please note that `efficient' in our case means $\mathrm{poly} \log N$ gate complexity, i.e., polynomial in the number of qubits rather than the size of $h$. In other words, simply looking up the entries of the $N\times N$ matrix $h$ would not suffice, as that takes time which is exponentially longer than desired. Despite this difficulty, the requirement of the efficient implementation of $\mathcal{O}_h$ can be satisfied for a variety of $h$ of interest.

A large family of free-fermionic models for which the sparse access to $h$ can be efficiently realized are $d$-dimensional tight-binding models. Consider a $d$-dimensional square lattice $\mathcal{L}$ with $L_1\times L_2\times..\times L_d=N_s$ sites, with either periodic or open boundaries. For each site $\vec{x}$, let there be up to $N_0=O(1)$ onsite degrees of freedom such as spin, or local orbital degrees of freedom. We can thus represent each fermionic mode using $n=\big(\Pi_{i=1}^d \lceil \log_2 L_i \rceil \big) \times \lceil \log_2 N_0 \rceil$ qubits as $\ket{\vec{x}=(x_1,\ldots, x_d),o}$ where $N_s=\Theta(2^n)$. 
Inside the lattice, let there be $O(1)$ non-overlapping rectangular domains, modeling different physical regions such as leads versus bulk regions, where parameters in $H$ can be different. We thus consider Hamiltonians of the following form:
\begin{align}
\label{eq:tight_binding_Ham_main}
    H=\sum_{\substack{o_{1},o_2}}\sum_{\substack{\vec{x}\in \mathcal{L},|\vec{t}|_{\rm M}\leq l}} h_{\substack{\vec{x},o_1,\vec{x}+\vec{t},o_2}}~a^\dag_{\vec{x}+\vec{t},o_2} a^{\phantom\dag}_{\vec{x},o_1} +{\rm h.c.},  
\end{align}
    where it is understood (but notationally awkward) that the sum over $\vec{x}\in \mathcal{L},|\vec{t}|_{\rm M}\leq l$ only counts each possible hopping term once. In addition, we have 
    \begin{align}
\label{eq:tb_Ham_dependencies_main}
&h_{\substack{\vec{x},o_1,\vec{x}+\vec{t},o_2}} =g\left(o_1,o_2,D(\vec{x}), D(\vec{x}+\vec{t}), \vec{t}\right),\notag \\
&|h_{\substack{\vec{x},o_1,\vec{x}+\vec{t},o_2}}| \leq 1.
\end{align}
Here $|.|_{\rm M}$ means Manhattan distance in the lattice; the maximal range of the interaction is posited to be constant --- $l=O(1)$. The function $D(\vec{x})$ returns the domain to which $\vec{x}$ belongs: since the domains are rectangular regions, $D(\vec{x})$ can be efficiently computed using standard reversible artihmetic circuits. If $\vec{x}$ or $\vec{x}+\vec{t}$ does not belong to any domain (for example, $\vec{x}+\vec{t}$ is beyond the boundaries of the lattice), the coefficient $h_{\substack{\vec{x},o_1,\vec{x}+\vec{t},o_2}}=0$. Thus, the function $g$ only takes in $O(1)$ information and all $O(1)$ possible nonzero outputs of $g$ can be stored classically, using, say, $O(n_a)$ bits. 
To realize the oracles $\mathcal{O}_h$ from Definition \ref{def:oracles} as poly$(n)$-sized quantum circuits, observe that one can efficiently generate the $O(1)$ input to $g$ and lookup the relevant information.

Going beyond local $d$-dimensional models, we give an example of a model on an \textit{expander graph} which has sparse query access. These graphs have the important property that the number of vertices that lie a distance $d$ away from a given vertex scales exponentially in $d$. Free-fermionic models on such graphs have been a subject of recent interest, especially in the studies of Anderson localization on random regular graphs \cite{tikhonov2016anderson, VA:expander,GarciaMata2017,GarciaMata2020}. In Appendix \ref{app:margulis}, we provide details of the realization of $\mathcal{O}_h$ as poly$(n)$-sized quantum circuits for a simple example: the Margulis expander graph. 

So far, we have proposed models with efficient sparse access where there was only a limited number of possible options for the hopping parameters, and they were input `by hand'. This is in line with a necessary limitation --- even though the system has size $N$, we should be unable to assign every mode an independent value of the hopping parameter. 

However, this restriction can be somewhat relaxed. In particular, one can show that local quenched disorder can also be incorporated into $h$. This has the significance for physics application, as it allows to study Anderson localization. For simplicity, let us focus on realizing onsite disorder in a single domain $D^*$ of a tight-binding model. This means that we introduce a single change to the Hamiltonian of Eqs.~\eqref{eq:tight_binding_Ham_main} and \eqref{eq:tb_Ham_dependencies_main}. Namely, if $D(\vec{x})=D^*$ and $\vec{t}=0$ (both equalities are efficiently checkable), the value of $h_{\vec{x},o_1,\vec{x}+\vec{t},o_2}$ will be replaced by
\begin{align} h_{\vec{x},o_1,\vec{x}+\vec{t},o_2}=\delta_{o_1,o_2}\, \mathrm{PRF}(\vec{x}),
\end{align}
where $\delta_{a,b}$ is the Kronecker symbol and $\mathrm{PRF}$ is a pseudo-random function of the lattice site coordinate $\vec{x}$. Note that a pseudo-random function can be realized as an efficient classical circuit \cite{goldreich1986construct, banerjee2012pseudorandom}. 
Other models of local disorder can be realized similarly. We note that an independent work \cite{chenchan} discusses the application of simulating disordered free fermions in more detail.

\section{Block-encodings of relevant matrix functions} 
\label{sec:blockencodingmatrixfunctions} 

Given the poly$(n)$-effort sparse access tuple $\mathcal{O}_h$, we now aim to realize a block-encoding of the desired matrix functions of $h$ (Section \ref{sec:matrixfunctions}) with an efficient quantum circuit. We will approximate these functions with polynomials of sufficiently low degree, enabling us to use standard methods of block-encodings manipulation (Proposition \ref{prop:gilyenUph}). 

To construct the polynomial approximations, let us first establish the following. Proposition \ref{prop:gilyenUph} prescribes how degree-$d$ polynomials $p_d(x)$ with $x=h/s$ can be block-encoded, with $s$ the sparsity of $h$. We thus require a polynomial approximation $p_d(x)$ to our functions of interest $F(h=sx)$ to be sufficiently accurate in the domain $x\in [-\|h\|/s,+\|h\|/s]$. It can be argued straightforwardly that this domain is at most $[-1,+1]$ by bounding the spectral norm of $h$: 
\begin{proposition}
    Let $h$ denote an $s=O(1)$-sparse Hermitian $N\times N$ matrix with $|h_{ij}| \leq 1, \: \forall i,j$. The spectral norm $||h||/s\leq 1$ by the \href{https://en.wikipedia.org/wiki/Gershgorin_circle_theorem}{Gershgorin circle lemma} (which says that every eigenvalue of $h$ lies within at least one of the $N$ discs  $D_i=\{z\in \mathbb{C}: |z-h_{ii}| \leq \sum_{j\neq i}|h_{ij}| \}$).
\label{lemma:gershgorinmain}
\end{proposition}

To block-encode the thermal correlation matrix in Eq.~\eqref{eq:fermi_dirac_main} and the thermal Green's function in Eq.~\eqref{eq:greensomega_main}, we need to approximate the functions 
\begin{equation}
    f^{(\beta)}(x) := \frac{1}{4}\frac{1}{1+\exp(\beta sx)}
\label{eq:thermalfunction}
\end{equation}
and
\begin{multline}
    g^{(\delta,\beta,\omega)}(x) := \frac{1}{4}\frac{\delta}{2}\bigg[\Big( 1-\frac{1}{1+\exp(\beta sx)} \Big) \frac{1}{i\delta - (sx+\omega)} \\ + \Big( \frac{1}{1+\exp(\beta sx)} \Big) \frac{-1}{i\delta + (sx+\omega)} \bigg]
\label{eq:greensfunctionmain}
\end{multline}
in the domain $x\in [-1,+1]$. These functions have poles in the complex plane at $z = (2k + 1)i\pi/\beta$ (with $k\in \mathbb{Z}$), and at $z = (\pm i\delta-\omega)/s$, respectively. Since these poles might lie in the unit circle for general $\beta$ and $\delta$, we have to resort to polynomial approximation techniques beyond Taylor approximations to obtain a sufficiently accurate approximation for $x\in [-1,+1]$. In particular, we will employ Bernstein's Theorem:

\begin{lemma}[\cite{Bernstein}]
    Let $f(x)$ be analytic on $[-1,+1]$ and analytically continuable to the interior of an ellipse defined by $E_{r} = \{ \frac{1}{2}(z+z^{-1})\colon |z| = r \}$ (for some real-valued $r\geq 1$). Furthermore, let $|f(z)|\leq C$ for $z\in E_{r}$. The error w.r.t. their polynomial approximation $p_{d}(x)$ (Chebyshev expansion truncated at degree $d$) can be bounded as
    \begin{equation}
        \max_{x\in [-1,+1]}|f(x)-p_{d}(x)| \leq \frac{2Cr^{-d}}{r-1}.
    \end{equation} 
\label{theorem:Bernstein}
\end{lemma}

Using this result, we derive the following error bounds for the polynomial approximations of Eqs. \eqref{eq:thermalfunction} and \eqref{eq:greensfunctionmain}. The proofs of Lemmas \ref{lem:fd_poly_apx_main} and \ref{lem:greens_poly_apx_main} are given in Appendices \ref{app:fm} and \ref{sec:greensfunction}. 

\begin{lemma}[Simplified version of Lemma \ref{lem:fd_poly_apx} in Appendix \ref{app:fm}]
\label{lem:fd_poly_apx_main}
For the function $f^{(\beta)}(x)$ in Eq.~(\ref{eq:thermalfunction}) (with $\beta,s\geq 0$), one can efficiently construct a polynomial $p_{d}(x)$ of degree $d$ such that
\vspace{-0.7cm}
\begin{center}
\begin{align}
    & {\rm max}_{x\in [-1,+1]} |f^{(\beta)}(x)-p_{d}(x)| \leq {\rm poly}(\beta s)/d. 
\end{align}
\end{center}
\end{lemma}

\begin{lemma}[Simplified version of Lemma \ref{lem:greens_poly_apx} in Appendix \ref{sec:greensfunction}]
\label{lem:greens_poly_apx_main}
For the function $g^{(\delta,\beta,\omega)}(x)$ in Eq.~ \eqref{eq:greensfunctionmain} (with $\beta,\delta,s>0$), one can efficiently construct a polynomial $p_{d}(x)$ of (even) degree $d$ such that
\vspace{-0.7cm}
\begin{center}
\begin{align}
    &{\rm max}_{x\in [-1,+1]} |g^{(\delta,\beta,\omega)}(x)-p_{d}(x)| \nonumber \\ &\:\leq \big({\rm poly}(\beta s) + {\rm poly}(s/\delta)\big)/d. 
\label{eq:greenspolyapprox}
\end{align}
\end{center}
\end{lemma}

Combining Lemmas \ref{lem:fd_poly_apx_main} and \ref{lem:greens_poly_apx_main} with Propositions \ref{prop:gilyenUh} and \ref{prop:gilyenUph}, we directly obtain Theorems \ref{lem:fermi_dirac_main} and \ref{lem:greensfunction_main} below. The detailed proofs are given in Appendices \ref{app:fm} and \ref{sec:greensfunction}. Note that -- crucially, because of the factors $\frac{1}{4}$ in Eqs. \eqref{eq:thermalfunction} and \eqref{eq:greensfunctionmain} -- the polynomials $p_d(x)$ that are block-encoded obey $|p_d(x)|\leq 1/2$ for $x\in [-1,+1]$, provided that the error of the polynomial approximation is $O(1)$. The size of the circuits that block-encode $M^{(\beta)}(h)$ in Eq. \eqref{eq:fermi_dirac_main} and $G^{(\delta,\beta,\omega)}(h)$ in Eq. \eqref{eq:greensomega_main} is poly$(n)$, provided that $\beta,1/\delta = {\rm poly}(n)$ and when the oracles from $\mathcal{O}_{h}$ are poly$(n)$-sized circuits (such as those in Section \ref{sec:oraclerealization}).

\begin{theorem}[Block-encoding of the thermal correlation matrix Eq.~\eqref{eq:fermi_dirac_main}]
\label{lem:fermi_dirac_main}
For an $s$-sparse Hamiltonian $h$ on $n$ qubits, assume access to the oracle tuple $\mathcal{O}_{h}$. We denote the controlled $(1,n+5,\varepsilon)$-block-encoding of $\frac{1}{4}M^{(\beta)} = \frac{1}{4}I/(I+\exp(\beta h))$ by $C$-$U_{M^{(\beta)}}$. 
The implementation of this block-encoding for $\beta = {\rm poly}(n)$ requires ${\rm poly}(n)/\varepsilon$ calls to oracles from the oracle tuple $\mathcal{O}_h$, $O(n)+n_a+\log^{5/2}({\rm poly}(n)/\varepsilon^4)$ ancillary qubits and $O(n)+{\rm poly}(n)/\varepsilon+\log^{5/2}({\rm poly}(n)/\varepsilon^4)$ additional elementary gates. To implement this block-encoding, an additional classical computing time of $\text{\normalfont{poly}}\big(n/\varepsilon,\log(1/\varepsilon)\big)$ is required. 
\end{theorem} 

\begin{theorem}[Block-encoding of the thermal Green's function Eq.~\eqref{eq:greensomega_main}]
\label{lem:greensfunction_main}
For an $s$-sparse Hamiltonian $h$ on $n$ qubits, assume access to the oracle tuple $\mathcal{O}_{h}$. We denote the controlled $(1,n+5,\varepsilon)$-block-encoding of $\frac{1}{4}\:G^{(\delta,\beta,\omega)}(h)$ in Eq.~\eqref{eq:greensomega_main} by $C$-$U_{G^{(\delta,\beta,\omega)}}$. 
The implementation of this block-encoding for $\beta,1/\delta = {\rm poly}(n)$ requires poly$(n)/\varepsilon$ calls to oracles from the oracle tuple $\mathcal{O}_h$, $O(n)+n_a + \log^{5/2}\big( {\rm poly}(n)/\varepsilon^4 \big)$ ancillary qubits and $O(n) + {\rm poly}(n)/\varepsilon + \log^{5/2}\big( {\rm poly}(n)/\varepsilon^4 \big)$ additional elementary gates. To implement this block-encoding, an additional classical computing time of $\text{\normalfont{poly}}\big(n/\varepsilon,\log(1/\varepsilon)\big)$ is required. 
\end{theorem}

Next, let us focus on block-encoding the time-evolved correlation matrix $M(t)$ in Eq. \eqref{eq:timeevolvedcorrelationmatrix}. To block-encode it, we will use a block-encoding of $\exp(iht)$ as a sub-routine. The construction of this latter block-encoding through polynomial approximations is already considered in \cite{Low_2019,Gilyen_2018_arxiv}, and we will use this construction from \cite{Gilyen_2018_arxiv} directly. We construct a block-encoding of $M(t)$ using the product of block-encodings of $\exp(iht)$, an initial correlation matrix $M$ and $\exp(-iht)$. A detailed proof of Theorem \ref{lem:time_evo_main} is given in Appendix \ref{sec:time-evolv}. There, we in fact consider a block-encoding of the more general object $M(t_1,t_2)$ in Eq. \eqref{eq:green2}. 

\begin{theorem}[Simplified version of Theorem \ref{lem:time_evo} in Appendix \ref{sec:time-evolv}:  Block-encoding of the time-evolved correlation matrix in Eq. \eqref{eq:timeevolvedcorrelationmatrix}]
\label{lem:time_evo_main}
For an $s$-sparse Hamiltonian $h$ on $N$ fermionic modes, assume access to the oracle tuple $\mathcal{O}_h$. In addition, assume access to the $(\alpha,m,\varepsilon_M)$-block-encoding $U_M$ of a correlation matrix $M$ of a fermionic state on $N$ modes.  
The $\big(\alpha, 2n+m+10,\varepsilon+\varepsilon_M\big)$-block-encoding $U_{M(t)}$ of $M(t)=e^{iht}Me^{-iht}$
can be produced using $D(\alpha,\varepsilon, t) = O\big(|t| + \log(\alpha/\varepsilon) \big)$ calls to oracles from the tuple $\mathcal{O}_{h}$, and a single use of the block-encoding $U_{M}$. Moreover, one uses $O\big( n|t| + \log(\alpha/\varepsilon) 
+ D(\alpha,\varepsilon, t)\big(n+\log^{5/2}(\alpha|t|/\varepsilon)\big)\big)$ elementary gates and $O\big(n_{a} + \log^{5/2}(\alpha|t|/\varepsilon)\big)$ ancillary qubits.
\end{theorem}

Combining Lemma \ref{lem:obs_extraction} with Theorems \ref{lem:fermi_dirac_main}, \ref{lem:greensfunction_main} and \ref{lem:time_evo_main}, we can respectively estimate entries of $M^{(\beta)}$ in Eq.~ \eqref{eq:fermi_dirac_main}, $G^{\delta,\beta,\omega}$ in Eq. \eqref{eq:greensomega_main}, and $M(t)$ in Eq. \eqref{eq:timeevolvedcorrelationmatrix}, up to $1/\mathrm{poly}\,(n)$ error with poly$(n)$ effort. Note that --- asymptotically --- the circuit implementing the controlled block-encodings (which is required for the Hadamard test in Lemma \ref{lem:obs_extraction}) is of the same size as the block-encoding circuits themselves.

\section{Extracting observables}
\label{sec:observables}
Having explicitly constructed $(\alpha,m,\varepsilon)$-block-encodings $U_{F(h)}$ of our objects of interest $F(h)$, let us detail how to extract relevant observables from such block-encoding unitaries. If $U_{F(h)}$ is given as a $\mathrm{poly}(n)$-sized quantum circuit, the real and imaginary parts of $F(h)_{ij}$ can be extracted efficiently using the so-called \href{https://en.wikipedia.org/wiki/Hadamard_test}{Hadamard test} using an ancillary-qubit-controlled-$U_{F(h)}$. Note that the circuit size required to implement controlled-$U_{F(h)}$ scales the same as $U_{F(h)}$, up to a constant factor overhead. We can extract $F(h)_{ij}$ with an accuracy specified in the next Lemma \ref{lem:obs_extraction}. This Lemma is stated for a general block-encoding unitary and is proved in Appendix \ref{sec:cor-estim}. From Lemma \ref{lem:obs_extraction} it is clear that the error up to which $F(h)_{ij}$ can be estimated is $1/{\rm poly}(n)$, since we allow for at most poly$(n)$ calls to the block-encoding unitaries. 

\begin{lemma}
\label{lem:obs_extraction}
Given an $n$-qubit matrix $A$. Let $C\text{-}U_A$ (acting on $n + m + 1$ qubits) denote the controlled version of the $(\alpha,m,\varepsilon)$-block-encoding $U_A$ of $A$. 
An estimate $\hat{A}_{ij}$ of entry 
$A_{ij}$ can be obtained s.t. $\bigl\lvert \hat{A}_{ij} - A_{ij} \bigr\rvert \leq \varepsilon+\alpha\tilde{\varepsilon}$ with probability at least $1-\delta$, using ${\rm poly}(n)$-sized circuits and at most $D(\tilde{\varepsilon}, \delta) = \Theta\big(\tilde{\varepsilon}^{-2}\log(4\delta^{-1})\big)$ calls to $C\text{-}U_A$.
\end{lemma}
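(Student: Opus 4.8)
The plan is to implement the standard Hadamard test on the controlled block-encoding $C\text{-}U_A$, prepared with the input state $\ket{j}_n\ket{0}_m$ on the target and ancilla registers, and to post-select the measurement outcomes that project the second register onto $\ket{i}_n$. Concretely, I would prepare the control qubit in $\ket{+}$, apply $C\text{-}U_A$ (controlled on the control qubit), and then, depending on whether we want the real or imaginary part, apply either nothing or an $S^\dagger$ gate to the control qubit before a final Hadamard. Measuring the control qubit in the computational basis while also measuring the $n$-qubit target register and keeping only the runs where the target register reads $\ket{i}$ yields, by the usual Hadamard-test identity, a random bit whose bias is $\tfrac12\bigl(1 \pm \Re\,[\alpha^{-1}(\alpha\bra{i}_n\bra{0}_m U_A\ket{j}_n\ket{0}_m)]\bigr)$ up to the conditioning; more carefully one does the unconditioned Hadamard test on $\bra{i}_n\bra{0}_m U_A\ket{j}_n\ket{0}_m$ by noting that the amplitude $\bra{i}_n\bra{0}_m U_A\ket{j}_n\ket{0}_m$ is directly accessible as an overlap. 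The cleanest route is: the Hadamard test with input $\ket{j}_n\ket{0}_m$ and post-measurement in the $\{\ket{i}_n\}\otimes\{\ket{0}_m, \ldots\}$ basis estimates $\Re\bigl(\bra{i}_n\bra{0}_m \tilde{U}_A \ket{j}_n\ket{0}_m\bigr)$ and similarly the imaginary part, where $\tilde U_A := \alpha(\bra{0}_m\otimes\mathbbm 1)U_A(\ket{0}_m\otimes\mathbbm 1)$ satisfies $\|A - \tilde U_A\|\le\varepsilon_1$ by the definition of an $(\alpha,m,\varepsilon_1)$-block-encoding. Each such run uses one call to $C\text{-}U_A$ plus $\mathrm{poly}(n)$ Clifford gates, so the circuit is $\mathrm{poly}(n)$-sized.

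The next step is the statistical estimation. Writing $\hat A_{ij}^{(\mathrm{BE})}$ for the exact quantity $\alpha\bra{i}_n\bra{0}_m U_A\ket{j}_n\ket{0}_m$ that the Hadamard test targets, the outcome of each shot is a bounded random variable (after rescaling by $\alpha$, it is bounded by $\alpha$ in magnitude, or one keeps the $\pm1$-valued raw bit and rescales the empirical mean). By Hoeffding's inequality, $D(\varepsilon_2,\delta)=\Theta(\varepsilon_2^{-2}\log(4\delta^{-1}))$ repetitions suffice to estimate the real part to additive error $\alpha\varepsilon_2/2$ with failure probability at most $\delta/2$ — the factor $4$ inside the logarithm and the split of the error budget are there to absorb a union bound over the real and imaginary parts, giving total failure probability $\le\delta$ and total error $\le\alpha\varepsilon_2$ on $\hat A_{ij}^{(\mathrm{BE})}$. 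Finally, combining with $\bigl\lvert \hat A_{ij}^{(\mathrm{BE})} - A_{ij}\bigr\rvert = \bigl\lvert (\tilde U_A)_{ij} - A_{ij}\bigr\rvert \le \|A - \tilde U_A\| \le \varepsilon_1$ and the triangle inequality yields $\bigl\lvert \hat A_{ij} - A_{ij}\bigr\rvert \le \varepsilon_1 + \alpha\varepsilon_2$ with probability at least $1-\delta$, as claimed.

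One subtlety worth handling carefully — and the step I expect to be the main obstacle in writing this rigorously — is that the Hadamard test as usually stated estimates an \emph{unconditioned} overlap of the form $\Re\bra{\psi}W\ket{\psi}$ for a unitary $W$ and a preparable state $\ket{\psi}$, whereas here we want the off-diagonal matrix element between the two \emph{different} computational basis states $\ket{i}_n\ket{0}_m$ and $\ket{j}_n\ket{0}_m$. The resolution is to note that $\bra{i}_n\bra{0}_m U_A\ket{j}_n\ket{0}_m$ is exactly the amplitude one reads off by running $U_A$ on $\ket{j}_n\ket{0}_m$ and measuring in the computational basis — so instead of a literal Hadamard test one can use the "generalized" SWAP/Hadamard test where the control qubit controls $U_A$, the target is initialized in $\ket{j}_n\ket{0}_m$, and one measures both the control and the target; conditioning on the target reading $i0\ldots0$, the control-qubit statistics encode $\Re$ and $\Im$ of the desired amplitude after normalizing by the conditioning probability. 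Care is then needed to show the normalization does not blow up the variance, but since we are free to instead simply not condition and treat "target $\ne i0\cdots0$" as contributing $0$ to a $\{-1,0,+1\}$-valued estimator whose mean is $\Re\bigl(\bra{i}_n\bra{0}_m U_A\ket{j}_n\ket{0}_m\bigr)$, the bounded-random-variable argument goes through unchanged and this is in fact the cleanest presentation. The remaining bookkeeping — the precise constant in $D(\varepsilon_2,\delta)$, and confirming that the error in the block-encoding enters only additively as $\varepsilon_1$ and is not amplified by $\alpha$ beyond what is already accounted for — is routine, and the full details can be deferred to the appendix text that follows.
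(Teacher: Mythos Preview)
Your overall architecture (Hadamard test, Hoeffding/Chernoff, triangle inequality with the block-encoding error) matches the paper, but the specific estimator you propose for the off-diagonal case is wrong. If you initialize in $\ket{j}_n\ket{0}_m$, apply $C\text{-}U_A$, Hadamard the control, and then measure both registers, the joint probabilities are $P(\text{control}=b,\ \text{target}=i,0^m)=\tfrac14\bigl\lvert \delta_{ij}+(-1)^b\,z\bigr\rvert^2$ with $z=\bra{i}\bra{0}_m U_A\ket{j}\ket{0}_m$. For $i\neq j$ both outcomes have probability $\tfrac14|z|^2$, so your $\{-1,0,+1\}$ estimator has mean exactly zero, not $\Re(z)$. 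The ``unconditioned'' variant therefore carries no signal about off-diagonal entries, and the conditioned variant fails for the same reason (the two conditional control-bit probabilities are equal). The root cause is that in the control-$=0$ branch the target sits in $\ket{j}\ket{0}_m$, which has no overlap with $\ket{i}\ket{0}_m$ when $i\neq j$, so there is no interference between the two branches at that measurement outcome.

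The paper's fix is the one-line trick you are missing: absorb the basis states into the unitary by sandwiching, $U:=(\mathbbm{1}\otimes U_i^{\dagger})\,U_A\,(\mathbbm{1}\otimes U_j)$ with $U_i,U_j$ depth-$1$ bit-flip circuits, so that the desired amplitude becomes the \emph{diagonal} overlap $\bra{0}^{\otimes(n+m)}U\ket{0}^{\otimes(n+m)}$. Now the vanilla Hadamard test applies verbatim (with an $R_z$ to toggle between real and imaginary parts), a single call to $C\text{-}U_A$ per shot suffices, and your Hoeffding/union-bound bookkeeping goes through unchanged to give $D(\varepsilon_2,\delta)=\Theta(\varepsilon_2^{-2}\log(4\delta^{-1}))$ and final error $\varepsilon_1+\alpha\varepsilon_2$.
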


We note that in case when $F(h)=M$ is a correlation matrix and $H$ corresponds to a lattice model, one can also obtain correlation matrix entries in momentum space --- by using $U_M$ and the efficient Quantum Fourier Transform circuit \cite{book:NC}. 

Going beyond individual matrix elements, for any local fermionic Hamiltonian term $H_x$ in $H$, for example $H_x=\left(h_{ij} a^\dag _j a_i+h^*_{ij}a^\dag _i a_j\right)$ (with $|h_{ij}|\leq 1$) or $H_x=\left(V_{ijkl} a^\dag _i a^\dag_j a_k a_l + V^*_{ijkl} a^\dag _l a^\dag_k a_j a_i\right)$ (with $|V_{ijkl}|\leq 1$), the expectation of that term w.r.t. a state $\rho$ can be efficiently extracted from the block-encoding of its correlation matrix $U_M$ \footnote{From this point onwards, all considered states are free-fermionic, unless stated otherwise.}. In this way one can also obtain the total energy density of $\rho$ relative to a system Hamiltonian $H$. To do so, one can sample from the Hamiltonian terms uniformly at random and evaluate the expectation value of individual terms as mentioned above. For $H$ being a free-fermion Hamiltonian, this sampling can be implemented using the sparse access model discussed below; this method of sampling can be extended to interacting Hamiltonians. 
We can obtain the following concentration bound on this evaluated energy density $e$, assuming, for simplicity, that the expectation of an individual term is learned from $U_{M}$ without error. By assumption, we have that $\bigl\lvert \mathrm{Tr}(H_x \rho) \bigr\rvert \leq 1$ for each Hamiltonian term $H_x$. This allows us to infer the Chernoff bound, which says that for sample size $S = \Theta\big(\varepsilon^{-2}\log(\delta^{-1})\big)$, we have
\begin{align}
        \mathbb{P}\Big(\bigl\lvert e-\mathrm{Tr}\big(H\rho \big)\bigr/K\rvert \leq \varepsilon \Big)\geq 1-\delta,
\end{align}
where $K = \Theta(2^n)$ is the number of terms in the Hamiltonian $H$. Similarly, densities of other Hermitian operators can be learned through sampling, such as the particle density $\langle \hat{N} \rangle/2^{n} = \text{Tr}(M)/2^{n}$.

\section{Complexity} 
\label{sec:complexity}
We have presented a method for simulating free-fermionic systems on $N=2^n$ modes with polynomial resources in $n$ in a variety of settings. The naive classical treatment of $2^n$ fermionic modes, on the other hand, requires exponential resources. Therefore, the naive speedup of our quantum method is exponential.
However, our approach comes with manifest qualifications, namely the requirement for the oracle tuple $\mathcal{O}_h$ to be implementable using poly$(n)$-sized quantum circuits, time dynamics being simulable only for time $t=\mathrm{poly}\,(n)$, thermal states for $\beta=\mathrm{poly}\,(n)$ and Green's functions for $\beta,1/\delta=\mathrm{poly}\,(n)$. Competing classical approaches could hypothetically exploit this structure of our setting. To settle this issue, one can readily argue that our method generally yields an exponential quantum speedup, by showing that it solves a BQP-complete problem. Roughly speaking, BQP-complete problems are the hardest problems which can be efficiently solved by a quantum computer \cite{BV}. 
Since for single-particle dynamics, the character of the particle, --be it a boson, fermion or distinguishable particle-- is not relevant,
BQP-hardness of time-dynamics follows in principle from Theorem 3 in \cite{babbush2023exponential}, using techniques such as those developed in Ref.~\cite{nagaj:phd}. For completeness, we provide a slightly different proof for the complexity of the evolution of a multi-particle fermionic state in Appendix \ref{sec:bqp}.

\begin{theorem}
    \label{thm:time_evo_main}
    Let $\rho_0$ be a (multi-particle) fermionic state on $2^n$ modes, such that its correlation matrix $M_0$ is sparse, and the access oracle tuple $\mathcal{O}_{M_0}$ can be implemented as a ${\rm poly}(n)$-sized quantum circuit.  Given a quadratic Hamiltonian $H$ on $2^n$ modes, let $h$ be as in Eq.~\eqref{eq:defhmain} and sparse, and we assume that the oracle tuple $\mathcal{O}_{h}$ is implemented as a ${\rm poly}(n)$-sized quantum circuit. For $t={\rm poly}(n)$, the problem is to decide whether, for some given mode $j$, $n_j(t)={\rm Tr}\,(a_j^{\dagger}a_j e^{-iHt}\rho_0 e^{i H t})\geq 1/p(\sqrt{n})$ (with $p$ a polynomial) or $\leq \exp(-\sqrt{n})$,  given a promise that either one is the case. This problem is BQP-complete.
\end{theorem}

\section{Quantum speed-up in a variety of settings} 
\label{sec:speedup}
We have established that our algorithms in principle provide an exponential speed-up, at least in the setting of time evolution. In this section, we argue what the speed-up is for several models of physical importance. To that end, let us first argue that for $d = O(1)$-dimensional lattice models, entries of our matrix functions of interest (see Section \ref{sec:matrixfunctions}) can be estimated {\em classically} with poly$(n)$ effort for $\beta,1/\delta,t = {\rm poly}(n)$. 

Lieb-Robinson bounds \cite{hastings, Chen_2023, tran-LR} imply that the time evolution of observables such as the occupation number of a mode $i$ at some position (starting from a product state with some modes occupied and others unoccupied) is only affected by $O(t^d) = {\rm poly}(n)$ sites in a ball of radius proportional to $t$ around that position. Similarly, Ref.~\cite{hastings} shows that, for a given mode $i$, the thermal correlation matrix entries $|M^{(\beta)}_{ij}|$ decay exponentially with the distance between modes $i$ and $j$, with a characteristic length $O(\beta)$. Mode $i$ is therefore only non-trivially correlated with $O(\beta^{d}) = {\rm poly}(n)$ modes in a ball of radius $O(\beta)$ around it. This latter fact suggests that an entry $M^{(\beta)}_{ij}$ can be classically evaluated with ${\rm poly}(n)$ effort, provided that $\beta={\rm poly}(n)$. Let us formalize this as follows. 

\begin{table*}[t]
\centering
{\renewcommand{\arraystretch}{1.3}
 \begin{tabular}{|| c | c | c | c ||} 
 \hline 
  & \begin{tabular}{@{}c@{}}$d = O(1)$-dim. \\ lattice models \end{tabular} & \begin{tabular}{@{}c@{}}expander \\ graphs \end{tabular} & \begin{tabular}{@{}c@{}}general sparse \\ models \end{tabular} \\ [1.5ex]
 \hline\hline
 Classical algorithms $^*$ & \multicolumn{3}{c||}{$r_{\text{prep},C}+N\cdot t$} \\ [0.5ex] 
 \hline
 Quantum algorithms & \multicolumn{3}{c||}{$r_{\text{prep},Q}+\text{poly}\log(N)\cdot t$} \\ [0.5ex]
 \hline
 Lieb-Robinson time & $N^{1/d}$ & $\log(N)$ & - \\ [0.5ex]
 \hline
 Speedup & \hspace{0.2cm} power-($d+1$) polynomial $^{**}$ \hspace{0.2cm} & \hspace{0.2cm} exponential $^{**}$ \hspace{0.2cm} & \hspace{0.2cm} exponential $^{***}$ \hspace{0.2cm} \\ [0.5ex]
 \hline
 \end{tabular}
 }
 \caption{Asymptotic run-times for evaluating entries of time-evolved correlation matrices (with $1/\text{poly}(n)$ error) for three different system types: lattice models, expander graphs and general sparse models. For the former two, we start from a thermal correlation matrix at $\beta = O(1)$ (of some $h'$, different from $h$ used for time evolution). For the latter, we start from a Slater determinant (free fermion pure) state. The third row gives the Lieb-Robinson time (only denoted for lattice models and expander graphs), which corresponds to the time it takes the Lieb-Robinson light cone to contain the entire system. The run-times of classical algorithms (for evolutions over a time interval which is at least the Lieb-Robinson time) and our quantum algorithms are given. In addition, we provide the associated speedups for the lattice models and expander graphs at the Lieb-Robinson time, and the speedup for general sparse models at $t = \text{poly}(n)$. The run-times required to prepare the starting state are denoted by $r_{\text{prep},C}$ and $r_{\text{prep},Q}$, for respectively the classical and quantum algorithms. Note that $r_{\text{prep},Q} = \text{poly}(n)$ in all three scenarios. $^*$Run-times of ---to the best of our knowledge--- the best classical algorithm for these applications \cite{Costa_2019}. $^{**}$Speedups compared to the aforementioned classical algorithms. $^{***}$Speedup assuming that it takes exponential (in $n$) time to solve BQP-complete problems classically.}
 \label{table:runtimes_main}
\end{table*}

\begin{lemma}
\label{lem:classicalsim}
    Let $h\in \mathbb{C}^{2^n \times 2^n}$ be an $s = O(1)$-sparse matrix that corresponds to a $d=O(1)$-dimensional lattice model, cf. Eq.~\eqref{eq:defhmain} with entries as in Eqs.~(\ref{eq:tight_binding_Ham_main}) and (\ref{eq:tb_Ham_dependencies_main}). Assume ${\rm poly}(n)$-effort classical access to the oracles $O_r$ and $O_a$ (see Definition \ref{def:oracles}) for $h$. Let $F(h)$ be a matrix function of $h$. If $\max_{x\in [-1,+1]}|F(x) - p_K(x)| \leq {\rm poly}(n)/K$ with $p_K(x)$ a degree-$K$ polynomial, then an entry $F(h)_{ij}$ can be estimated with that same error using ${\rm poly}(K)\times {\rm poly}(n)$ classical effort. For some $K = {\rm poly}(n)$, the error thus becomes $1/{\rm poly}(n)$ with ${\rm poly}(n)$ classical effort. 
\end{lemma}
\begin{proof}
    If one is able to estimate $\bra{i}h^{k}\ket{j}$ for any $k\in\{0,1,\ldots,K\}$ with effort $E$, then $\bra{i}p_K(h)\ket{j} = \sum_{k=0}^{K}\alpha_k \bra{i}h^{k}\ket{j}$ can be evaluated with effort $K\times E$. By assumption, $\bra{i}F(h)\ket{j}$ can then be classically approximated up to ${\rm poly}(n)/K$ error with $K\times E$ effort. Since $h$ corresponds to a $d=O(1)$-dimensional lattice model, $h^{k}\ket{j}$ is only supported on $O(k^d) = {\rm poly}(k)$ $\ket{i}$'s. We can thus evaluate each $\bra{i}h^{k}\ket{j}$ for $k\in\{0,1,\ldots,K\}$ using ${\rm poly}(k)$ calls to the oracles and with a total $E = {\rm poly}(k)\times {\rm poly}(n)$ computational effort. Therefore, $\bra{i}F(h)\ket{j}$ can be approximated classically with ${\rm poly}(n)/K$ error with $K\times {\rm poly}(k)\times {\rm poly}(n) = {\rm poly}(K)\times {\rm poly}(n)$ effort. Clearly, there is a $K = {\rm poly}(n)$ so that the error becomes $1/{\rm poly}(n)$ and which yields a ${\rm poly}(n)$ classical effort. 
\end{proof}

Combined with Lemmas \ref{lem:fd_poly_apx_main} and \ref{lem:greens_poly_apx_main}, Lemma \ref{lem:classicalsim} implies the following for $d=O(1)$-dimensional lattice models. In the parameter regimes of Theorems \ref{lem:fermi_dirac_main} and \ref{lem:greensfunction_main}, entries of the thermal correlation matrix in Eq.~\eqref{eq:fermi_dirac_main} and of the thermal Green's function in Eq.~ \eqref{eq:greensomega_main} can be estimated  up to $1/{\rm poly}(n)$ error with poly$(n)$ classical effort. 

Using similar reasoning, entries of the time-evolved correlation matrix $M(t)$ in Eq. \eqref{eq:timeevolvedcorrelationmatrix} can be evaluated classically with poly$(n)$ effort for $t = {\rm poly}(n)$. In fact, assuming exact classical access to entries $\bra{k}M\ket{l}$ of an initial correlation matrix $M$ for given $(k,l)$, one can obtain entries $M(t)_{ij}$ with $1/\exp(n)$ error. The improved error scaling comes from the fact that the polynomial approximation error of $\exp(ith)$ can be bounded by $1/\exp(n)$ even for degree $K = {\rm poly}(n)$, provided that $t = {\rm poly}(n)$. A detailed treatment is given in Appendix \ref{sec:classicalsim}. Note that if we apply the time evolution to $M'^{(\beta)}$ (where $M'^{(\beta)}$ is the thermal correlation matrix corresponding to some $h'\neq h$), the accuracy reduces to $1/{\rm poly}(n)$ due to the error in estimating entries of $M'^{(\beta)}$. 

Despite losing the exponential speed-up for $d=O(1)$-dimensional lattice models, let us argue that we retain a power-$(d+1)$ polynomial speed-up for such models. Let us focus on the task of estimating entries of the time-evolved correlation matrix from Eq.~ \eqref{eq:timeevolvedcorrelationmatrix}. In particular, let us focus on the task of time-evolution for $t$ proportional to the Lieb-Robinson time $t_{LR}$, which is the time it takes for a Lieb-Robinson light cone to contain the entire system. For lattice models, $t_{LR} = N^{1/d}$. To then compute an entry of the correlation matrix $M(t)=e^{iht} M e^{-iht}$, known classical algorithms require $\Omega(Nt)=\Omega(t^{d+1})$ run-time \cite{Costa_2019}. Given the ${\rm poly} \log (N) \cdot t$ runtime of our quantum algorithm, we obtain a power-$d+1$ polynomial speedup. In particular, this yields a cubic speedup for $d=2$ lattices and quartic speedup for $d=3$
---which can be of interest in early fault-tolerant devices \cite{babbush-beyond-quadratic}. 

Crucially, our method can also be applied to settings other than lattice models, and the exponential speedup for those settings can be maintained. Let us consider tight-binding models on expander graphs, such as the Margulis graph considered in Section \ref{sec:oraclerealization}. The Lieb-Robinson time, due to the expansion property of the graph, will be logarithmic in the number of modes $N$: $t_{LR} = \log(N)$. We note that light cones also grow rapidly in other graphs with \textit{log-sized diameter}, such as the hyperbolic lattices (see \cite{kollar} for recent studies of such tight-binding models). 
We expect to recover the full exponential quantum speedup for their simulation because at $t_{LR} = \log(N)$, the quantum run-time is poly$\,\log(N)$ while known classical algorithms have run-time $\Omega(Nt)=\Omega(N\log(N))$ \cite{Costa_2019}. This speed-up can be of particular interest for, e.g., the study of Anderson localization on expander graphs \cite{tikhonov2016anderson, VA:expander, GarciaMata2017,GarciaMata2020}.

To summarize the quantum advantage in different problem settings, Table \ref{table:runtimes_main} gives an overview of the asymptotic run-times of classical algorithms and our quantum algorithms, and associated quantum speedups for the problem of time-evolution.

\section{Generalizations} 
\label{sec:generalizations}
The time-evolution framework presented in this paper can be made more general and applied to systems beyond free fermions. In a general quantum system described by a Hamiltonian $H$, one can consider a $N$-sized set of operators $\{O_j\}$ such that $[H,\, O_k]=\sum^{N}_{j=1}h_{jk}O_j$. This is sufficient for a matrix $M_{jk}=\mathrm{Tr}\big(\rho O^\dag_j O_k\big)$ to transform as $M\mapsto e^{-iht} M e^{iht}$ under time evolution. Further assuming that $h$ is a hermitian matrix, this allows treatment of $M$ as a block-encoding of the type considered in this work. Beyond the free-fermionic systems on which we focused in this work, this general framework admits fermionic $H$ which include pairing $(\Delta a_j a_k + h.c.)$ terms. In this case the relevant set $\{O_j\}$ would include not just annihilation but also creation operators. Another example is a system of $2^n$ free bosons with particle conservation, in which case $\{O_j\}$ should be chosen as bosonic annihilation operators. Beyond $\mathrm{Tr}\big(\rho O^\dag_j O_k\big)$, one can consider $M_{j_1,..,j_l;k_1,..,k_{l'}}=\mathrm{Tr}\big(\rho O^\dag_{j_1}..O^\dag_{j_l} O_{k_1}..O_{k_{l'}}\big)$, which can be considered as a rectangular matrix acting on $n\cdot \mathrm{max}(l,l')$ qubits, and block-encoded accordingly. The time evolution of these objects is defined similarly to that of $M_{jk}$, and therefore can be easily found as a block-encoding, given the block-encoding of the initial state. The flexibility of this general block-encoding framework is comparable to the one based on `shadow' states, presented in Ref.\,\cite{GoogleShadow} (see Appendix \ref{sec:alt_encodings} for a discussion of the differences). 

\section{Discussion} 
\label{sec:discussion}
In this work, we develop quantum algorithms that solve several free fermion problems. We discuss in detail what type of speedup is achieved over classical algorithms and present generalizations of our approach. 

One obvious avenue for future research is to apply our method to other matrix functions of $h$. For example, one should be able to estimate the free energy density of a $2^n$-mode free-fermion system $\frac{F}{2^n}=-(\beta 2^n)^{-1}\log {\rm Tr}\,(e^{-\beta H})=-(\beta 2^n)^{-1}{\rm Tr}\,(\log(I+e^{-\beta h}))$ with error $\varepsilon$, using a polynomial approximation of the function $\log(I+e^{-\beta h})$ for $\beta={\rm poly}(n)$, the block-encoding of $h$, and sampling entries to model the trace function. Using an estimate of the free energy density $F/2^n=(\langle H \rangle_{\beta}-\beta^{-1}S(\rho_{\beta}))/2^n$, one can in turn estimate an entropy density, given an energy density estimate, or a derivative of $F/2^n$ with respect to $\beta$ such as the specific heat. Another possible generalization of our work is a ${\rm poly}(n)$-efficient estimation of matrix elements or observable expectations due to free-fermionic {\em dissipative} dynamics, which was shown to be classically simulatable in $O(2^{3n})$ time in \cite{BK:dissip}. 

One could also consider how block-encoding techniques fare when applied to estimating entries of a free-bosonic thermal correlation matrix $M^{(\beta)}=I/(e^{\beta h}-I)$ of Bose-Einstein form. A block-encoding of a polynomial approximation as developed in Lemma \ref{lem:fd_poly_apx_main} and Theorem \ref{lem:fermi_dirac_main} in Section \ref{sec:blockencodingmatrixfunctions} requires a ${\rm poly}(n)$ bound on the mode occupation number (so that the matrix function be block-encoded), which can however grow as large as the number of particles for a Bose-Einstein condensate. Mathematically, the Bose-Einstein distribution with $\epsilon_i\geq 0$ has a singularity at $\epsilon_i =0$ which has to be avoided (by choosing a small enough chemical potential $\mu$) in order to place any bound.
Note that similar points about only algebraic speed-ups for local lattice models (Lemma \ref{lem:classicalsim}) were made for bosonic/oscillator systems in a more recent work \cite{sakamoto2025}.

Another outstanding open direction is to compute and optimize the precise implementation overhead and circuit depth for our proposed algorithms, as applied to simulation problems of practical interest. 

Let us point out an open question in the setting of time-dynamics on $2^n$ fermionic modes (cf. Eq.~\eqref{eq:timeevolvedcorrelationmatrix}). One task that can be performed with poly$(2^n)$ classical effort \cite{TD:freefermion} is computing the overlap 
\begin{multline}
    |\bra{S_1}\exp(-itH)\ket{S_2}|^2 =\\ {\rm Tr}\big[ \underbrace{\exp(-itH)\ket{S_2}\bra{S_2}\exp(itH)}_{\ket{S_3}\bra{S_3}} \ket{S_1}\bra{S_1}\big],
\label{eq:overlap}
\end{multline}
with $\ket{S_1}$ and $\ket{S_2}$ single-Slater determinant states and $H$ a free fermion Hamiltonian as in Eq.~\eqref{eq:defhmain}, and therefore $\ket{S_3}$ is also a Slater determinant state. If $\ket{S_1}\bra{S_1}$ (for simplicity) is a standard-mode-basis Slater determinant state, then it can be expressed as a product of $2^{n+1}$ creation and annihilation operators. Using Wick's theorem, evaluating this weight-$2^{n+1}$ correlator in Eq.~\eqref{eq:overlap} requires evaluating products of $2^n$ entries of the correlation matrix (cf. Eq.~\eqref{eq:defMmain}) of state $\ket{S_3}$. This task -- at least with naive attempts -- cannot be performed using our methods with poly$(n)$ quantum effort, since we can only evaluate poly$(n)$ entries of the time-evolved correlation matrix, although approximate sampling methods could come into play.

\section{Acknowledgements}
We thank C. Beenakker, A. Bishnoi, J. Helsen, T.E. O'Brien, M. Pacholski, S. Polla, K.S. Rai, R. Somma, A. Ciani and A. Montanaro for insightful discussions and feedback. This work is supported by QuTech NWO funding 2020-2026 – Part I “Fundamental Research”, project number 601.QT.001-1, financed by the Dutch Research Council (NWO).  Y.H. acknowledges support
from the Quantum Software Consortium (NWO Zwaartekracht).


%


\appendix

\section{Alternative Encodings} 
\label{sec:alt_encodings}

In this section we describe alternative ways of representing a fermionic correlation matrix using qubits and their potential drawbacks.

A compressed representation of free-fermionic states on $2^n$ modes, as well as their dynamics, is readily obtained by using a (mixed) quantum state $\sigma=M/{\rm Tr}(M)$ of $n$ qubits to represent the normalized correlation matrix of $\rho$. One then computes, ---evolves and measures---, with $\sigma$ to learn properties of $\rho$ or its time-dynamics. For pure single-particle free-fermionic states $\rho$, $\sigma$ is a rank-1 projector, and $\sigma$ projects onto the bitstring $\ket{i}$ when $\rho$ corresponds to $a_i^{\dagger}\ket{\rm vac}$, $i=0,\ldots, N-1$ where $\ket{\rm vac}$ is the fermionic vacuum state. Once a state $\sigma$ is prepared, its time-evolution can readily be simulated: when $\rho$ evolves via $e^{-i H t}$ with free-fermion Hamiltonian $H$, $\sigma \rightarrow e^{i ht} \sigma e^{-i ht}$. Sparse oracle access to $h$ ---see Definition \ref{def:oracles}--- then allows for the efficient implementation of time-evolution in terms of its dependence on $t$ and calls to the oracle \cite{Low_2019, lin2022lecturenotesquantumalgorithms}, starting from some easy-to-prepare initial state. For example, the initial state could be a set of fermions in a subset $S$ of $2^m$ modes $\ket{i}$ (such that an efficient classical circuit can map $S$ onto the set of $m$-bitstrings), or a subset of modes in the Fourier-transformed basis (as the QFT is an efficient quantum circuit). One can also adapt the heuristic quantum Metropolis-Hastings algorithm \cite{Temme_2011,QMS:irani} to the Fermi-Dirac distribution and sparse Hamiltonians $h$, since the algorithm uses quantum phase estimation for $e^{i ht}$ at its core.
Even though the algorithm converges to the thermal state $\sigma_{\beta}=M^{(\beta)}/{\rm Tr}(M^{(\beta)})$, ${\rm poly}(n)$-efficiency is not guaranteed and unlikely for low-enough temperature.

Given a state $\sigma$, one can apply any learning algorithm for $n$-qubit states. For example, one can use shadow tomography \cite{HKP:shadow} to estimate the expectation of $L$ observables, such as $O_k=\ket{k}\bra{k}, O_{lk}^R=\ket{l}\bra{k}+\ket{k}\bra{l},O_{lk}^L=i(\ket{l}\bra{k}-\ket{k}\bra{l})$, with computational effort $O(\log (L))$ using random Clifford circuits of ${\rm poly}(n)$ size. 

There are a few disadvantages to this simple and direct method of representing the state via its correlation matrix. It is not immediately obvious how to estimate a time-dependent correlation function as in Eq.~\eqref{eq:greenmain} as it relates to measurements on $e^{i h t_1} \sigma e^{-i h t_2}$ which is not a state. Second, and more crucially, any learning of a linear function of $\sigma$ with accuracy $\varepsilon$, leads to learning with accuracy $\varepsilon \,{\rm Tr}(M)=\varepsilon \langle \hat{N} \rangle$ for the correlation matrix $M$ itself. Therefore one expects poor accuracy for large particle number $\langle \hat{N} \rangle$; this in particular makes it impractical to extract individual matrix elements.

Thus in the main text of this paper we choose not to directly encode a correlation matrix as a quantum state, but rather apply quantum computational block-encoding techniques.

Recently, Ref.~\cite{GoogleShadow} introduced a general quantum simulation framework with compressed `shadow' quantum states with applications to free bosons and free fermion systems. We note that the results in Ref.~\cite{GoogleShadow} use yet a different encoding than the encoding described above, or the block-encoding in the main text. Like for the encoding in the previous paragraph, the normalization of the shadow state in Ref.~\cite{GoogleShadow} can lead to a loss of efficiency if one wishes to estimate only few entries of the correlation matrix (this loss of efficiency is avoided in our block-encoding method). In particular, the normalization of the shadow state is $a$, which is bounded as $\sqrt{{\textstyle\sum}_{j}(\langle \hat{N}_{j} \rangle-1/2)^{2}} \leq a \leq \exp(n)$, where $\langle \hat{N}_j\rangle$ is the occupation number in the mode $j$ of the represented state $\rho$.
On the other hand, when estimating densities, for example the energy density, our methods use sampling to estimate $\text{Tr}(H\rho)/K$ (with $K=\Theta(2^{n})$, the number of terms in $H$) with some error $\varepsilon$, while Ref.~\cite{GoogleShadow} estimates $\text{Tr}(H\rho)/O(2^{n/2}a)$, which, depending on the value $a$, can be more efficient. 

The precise relation between the shadow state approach \cite{GoogleShadow} and the block-encoding framework presented in this work is currently unclear. A plausible hypothesis is that the latter is strictly more powerful, due to the signal strength difference discussed above. A concrete interesting question is whether a shadow state corresponding to $M_{jk}=\mathrm{Tr}\big(\rho a^\dag_j a_k\big)$ (or, more generally, $\mathrm{Tr}\big(\rho O^\dag_j O_k\big)$) can always be produced using a block-encoding $U_M$ of $M$. In the `typical' case $\mathrm{Tr}\big(M^\dag M\big)=\Theta(2^n)$, this can be done simply by acting with $U_M$ on the maximally entangled state between $j$ and $k$ registers, and postselecting on the zero value of ancillary qubits. This `Choi–Jamiołkowski' strategy, however, does not give a constant success rate when $\mathrm{Tr}\big(M^\dag M\big)=o(2^n)$, and should be adapted.

\section{Remarks on oracle conventions}
\label{app:def}

In this work, we define row and matrix entry oracles as in Definition \ref{def:oracles}. An alternative definition of a row oracle, used in, for instance, Ref.~\cite{Gilyen_2019}, is 
\begin{equation}
    O_r^{\:\text{alt}} \ket{i}\ket{k, 0^{(n+1)-\lceil\log(s)\rceil}}=\ket{i}\ket{r(i,k)},~~\forall i\in[2^n], k\in[s],
\end{equation}
with $O_r^{\:\text{alt}}$ acting on $2(n+1)$ qubits. Again, if row $i$ contains $s'<s$ non-zero entries, then the last $n+1$ qubits are set to $\ket{1}\ket{k}$. We note that having access to $O_{r}$ in Eq.~\eqref{eq:oracles_main} implies access $O_r^{\:\text{alt}}$ and vice versa. 

In Ref.~\cite{Gilyen_2019}, $O_r^{\:\text{alt}}$ and $O_{a}$ are used to block-encode a sparse matrix $h$. In principle, this block-encoding scheme requires another (column) oracle $O_{c}^{\:\text{alt}}$ when it is used to block-encode \textit{general} sparse matrices $h$. If $h$ is also Hermitian, which is the case for all applications considered in this work, this block-encoding can be implemented with just $O_r^{\:\text{alt}}$ and $O_{a}$, since $O_{c}^{\:\text{alt}}$ can be realized using $O_r^{\:\text{alt}}$ and some SWAP gates.

\section{Margulis Expander Graphs}
\label{app:margulis}
In the main text, we have provided an example of a $d$-dimensional model which has sparse query access. Going beyond these models, in this appendix we consider an example of a model on an \textit{expander graph} which has sparse query access. Expander graphs are bounded-degree graphs, which have the so-called \textit{expansion} property. In particular, when counting the vertices away from a given vertex by a distance $d$, one obtains a number that scales exponentially with $d$. We will focus on realizing sparse access for a particular simple example, which is the Margulis expander graph.

A Margulis graph $\mathcal{G}_{M}$ of size $N^2$ has vertices $v$ labeled by tuples $v=(v_1,v_2)\in[N]\times[N]$; an edge between two vertices $u$ and $v$ is placed if $u=t_{l}(v)$ where the functions $t_l$ for $l\in [4]$ are defined as $t_0\left(\,(v_1,v_2)\,\right)=(v_1+1\,\mathrm{mod}\,N, v_2)$, $t_1\left(\,(v_1,v_2)\,\right)=(v_1, v_2+1\,\mathrm{mod}\,N)$, $t_2\left(\,(v_1,v_2)\,\right)=(v_1+v_2\,\mathrm{mod}\,N, v_2)$, and $t_3\left(\,(v_1,v_2)\,\right)=(v_1, v_2+v_1\,\mathrm{mod}\,N)$.
In other words, the first two types of edges are simple nearest-neighbour links along the vertical and horizontal directions, with periodic boundary conditions. From this perspective, the edges $t_2$ and $t_3$ are  geometrically non-local, and are the source of the expansion property of the graph.
We define our tight-binding Hamiltonian on the Margulis graph as follows. Each fermionic mode is labeled by the vertex of the graph, so the total number of modes is $N^2$. The Hamiltonian takes the form
\begin{align}
H_{\mathrm{Marg}}&=\sum_{l\in[4]}\sum_{v\in  [N]\times[N]} \left(a^\dag_{v}a^{\phantom\dagger}_{t_l(v)}+a^\dag_{t_l(v)}a^{\phantom\dagger}_{v}\right).\label{eq:Margulis_Ham}
\end{align}
For a given $v$, modular addition circuits allow to efficiently generate a list of $u=t^{\pm 1}_l(v)$. This list can be used to construct an oracle $O_r$; to ensure distinct outputs, if some of $8$ values of $u$ coincide, one stores only one of the colliding outputs. The oracle $O_a$ then represents collisions with an increased matrix element $h_{vu}$, realized by counting the times $u$ occurs in the list of $t^{\pm 1}_l(v)$. We expect that more models on expander graphs can be implemented in a similar way -- especially in the family of constant degree \href{https://en.wikipedia.org/wiki/Ramanujan_graph}{Ramanujan Cayley graphs}, of which the Margulis graph is an example.

\section{Block-encoding the thermal correlation matrix}
\label{app:fm}

In this appendix, we prove Theorem \ref{lem:fermi_dirac_main} from the main text. In particular, we prove a more detailed version of it, namely Theorem \ref{lem:fermi_dirac} below. In its proof we use Propositions \ref{prop:gilyenUh} and \ref{prop:gilyenUph} on the block-encoding of polynomials of sparse matrices, and Proposition \ref{lemma:gershgorinmain} and Lemma \ref{lem:fd_poly_apx} (of which Lemma \ref{lem:fd_poly_apx_main} is a simplified version) on constructing a polynomial approximation to our desired matrix function $M^{(\beta)}$ in Eq.~\eqref{eq:fermi_dirac_main}. We will first prove Theorem \ref{lem:fermi_dirac} and then Lemma \ref{lem:fd_poly_apx}. 

As was argued in Section \ref{sec:blockencodingmatrixfunctions} of the main text using Proposition \ref{lemma:gershgorinmain}, we wish to construct accurate polynomial approximations of $F(sx)$ for $x\in [-1,+1]$. Let us state Lemma \ref{lem:fd_poly_apx}, which will be proved at the end of this section. 
\begin{lemma}
\label{lem:fd_poly_apx}
For a function $f(x)=\frac{1}{4}\frac{1}{1+{\rm exp}\,{\beta s x}}$ (with $\beta s>0,x \in [-1,+1]$), one can efficiently construct a polynomial $p_{d}(x)$ of degree $d$ such that
\vspace{-0.7cm}
\begin{center}
\begin{align}
    & {\rm max}_{x\in [-1,+1]} |f(x)-p_{d}(x)| \nonumber \\ &\:\leq \begin{cases}
        \frac{3}{d}\big(\frac{\beta s}{\pi}\big)^{4}, & \text{ if }\frac{\beta s}{2\pi}\geq 1, \\
        \frac{10}{d}\big(\frac{\beta s}{\pi}\big)^{2}, & \text{ if }\frac{\beta s}{2\pi} < 1.
        \end{cases}
\end{align}
\end{center}
\end{lemma}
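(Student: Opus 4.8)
The plan is to let $p_d$ be the degree-$d$ truncation of the Chebyshev expansion of $f$ on $[-1,+1]$ and to control the error through the classical Bernstein-ellipse convergence bound for analytic functions (Ref.~\cite{Bernstein}). First I would record the analytic structure of $f$: since $1+e^{cz}$ vanishes only at $z=i\pi(2k{+}1)/c$ with $k\in\mathbb{Z}$, the function $f(z)=\tfrac14(1+e^{cz})^{-1}$ is analytic on the open strip $|\mathrm{Im}\,z|<\pi/c$, which contains $[-1,1]$, with its nearest singularities at distance $\pi/c$ from the real axis --- this is where the parameter $c/\pi$ originates. The Chebyshev coefficients $a_k=\tfrac{2}{\pi}\int_{-1}^{1}f(x)T_k(x)(1-x^2)^{-1/2}\,dx$ ($T_k$ the Chebyshev polynomials) are computable to arbitrary precision by a fast cosine transform of $f$ sampled at $O(d)$ Chebyshev nodes, so $p_d=\sum_{k\le d}a_kT_k$ is constructible in classical $\mathrm{poly}(d)$ time; the substance of the lemma is the error bound.

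Next I would fix a Bernstein ellipse $E_\rho$ (foci $\pm1$, semi-axis sum $\rho$, semi-minor axis $b_\rho=\tfrac12(\rho-\rho^{-1})$) and apply $\|f-p_d\|_{[-1,1]}\le \tfrac{2M}{\rho-1}\,\rho^{-d}$ with $M=\sup_{E_\rho}|f|$. To keep $M$ bounded I would always impose $b_\rho\le\pi/(2c)$, so that $E_\rho$ lies well inside the strip. One then bounds $M$ using $|1+e^{cz}|^2=(e^{c\,\mathrm{Re}\,z}-1)^2+4e^{c\,\mathrm{Re}\,z}\cos^2(\tfrac c2\mathrm{Im}\,z)$: on the part of $E_\rho$ with $|\mathrm{Re}\,z|>\pi/(2c)$ one has $|1+e^{cz}|\ge|1-e^{c\,\mathrm{Re}\,z}|\ge1-e^{-\pi/2}$, whereas on the part with $|\mathrm{Re}\,z|\le\pi/(2c)$ one has $|1+e^{cz}|\ge2e^{c\,\mathrm{Re}\,z/2}|\cos(\tfrac c2\mathrm{Im}\,z)|\ge\sqrt2\,e^{-\pi/4}$, using $|c\,\mathrm{Re}\,z|\le\pi/2$ there and $|c\,\mathrm{Im}\,z|\le\pi/2$ (from $b_\rho\le\pi/(2c)$). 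Hence $M\le\tfrac14(\sqrt2\,e^{-\pi/4})^{-1}<\tfrac12$ uniformly. Finally I would recast the exponential factor as the advertised $1/d$ via $\rho^{-d}=e^{-d\ln\rho}\le(e\,d\ln\rho)^{-1}$ (from $\max_{y\ge0}ye^{-y}=e^{-1}$), giving $\|f-p_d\|\le\tfrac{2M}{e\,d\,(\rho-1)\ln\rho}$.

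It remains to choose $\rho$ in the two regimes and to lower-bound $(\rho-1)\ln\rho$. When $c/(2\pi)\ge1$ the strip is narrow, so I would take $\rho$ with $b_\rho$ of order $(\pi/c)^2$ (admissible since then $b_\rho\le\pi/(2c)$), whence $\rho-1\asymp\ln\rho\asymp(\pi/c)^2$, so $(\rho-1)\ln\rho\gtrsim(\pi/c)^4$ and, with $M<\tfrac12$ and a suitable constant in $b_\rho$, $\|f-p_d\|\le\tfrac3d(c/\pi)^4$. When $c/(2\pi)<1$ the strip is wide, so I would take the larger ellipse with $b_\rho=\pi/(2c)$ (still admissible); now $\rho$ is bounded below by a constant $>1$ and grows like $\pi/c$ as $c\to0$, and estimating $\sup_{d\ge1}d\rho^{-d}$ (which is $\le(e\ln\rho)^{-1}$ in general and $\le\rho^{-1}$ once $\rho\ge e$) yields $\|f-p_d\|\le\tfrac{10}{d}(c/\pi)^2$. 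The two prescriptions for $b_\rho$ coincide at $c=2\pi$, so the cases dovetail at the boundary.

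The step I expect to be the main obstacle is the uniform lower bound on $|1+e^{cz}|$ over $E_\rho$, i.e. the claim $M=O(1)$: the estimate $|1+e^{cz}|\ge2e^{c\,\mathrm{Re}\,z/2}|\cos(\tfrac c2\mathrm{Im}\,z)|$ becomes useless near the left edge of $E_\rho$, where $e^{c\,\mathrm{Re}\,z/2}$ decays exponentially in $c$, so there one must instead use $|1+e^{cz}|\ge|1-e^{c\,\mathrm{Re}\,z}|$ --- hence the split on $\mathrm{Re}\,z$, which only works because the constraint $b_\rho\le\pi/(2c)$ keeps $|\cos(\tfrac c2\mathrm{Im}\,z)|$ bounded away from $0$ on the middle part. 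Everything else --- choosing $\rho$, the $ye^{-y}\le e^{-1}$ trick, the cosine-transform construction of $p_d$ --- is routine, though pinning down the numerical prefactors $3$ and $10$ across the two regimes takes some bookkeeping.
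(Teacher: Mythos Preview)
Your proposal is correct and follows the same overall strategy as the paper --- truncated Chebyshev expansion plus Bernstein's ellipse bound with semi-minor axis at most $\pi/(2c)$ --- but the paper's execution is simpler in two places worth noting. First, the obstacle you flag as ``the main obstacle'' is not actually one: since $|c\,\mathrm{Im}\,z|\le\pi/2$ on the ellipse gives $\cos(c\,\mathrm{Im}\,z)\ge0$, one has directly $|1+e^{cz}|\ge\mathrm{Re}(1+e^{cz})=1+e^{c\,\mathrm{Re}\,z}\cos(c\,\mathrm{Im}\,z)\ge1$, so $M\le\tfrac14$ by a one-line argument with no case split on $\mathrm{Re}\,z$. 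Second, the paper uses a \emph{single} ellipse $\rho=\sqrt{(\pi/c)^2+1}$ for both regimes and converts $\rho^{-d}$ to $1/d$ via the Bernoulli-type bound $(1+(\pi/c)^2)^{d/2}\ge1+\tfrac d2(\pi/c)^2$, which already extracts the extra $(\pi/c)^2$ factor in the $c\ge2\pi$ case; this makes your ellipse-shrinking to $b_\rho\sim(\pi/c)^2$ (a correct workaround for the cruder $ye^{-y}\le e^{-1}$ estimate) unnecessary. The two-regime split then appears only when lower-bounding $\rho-1$.
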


\begin{theorem}
\label{lem:fermi_dirac}
For an $s$-sparse Hamiltonian $h$ on $n$ qubits, assume access to the oracle tuple $\mathcal{O}_{h}$. We denote the controlled $(1,n+5,\varepsilon_{\text{PA}} + \varepsilon_{p(h)} + \delta)$-block-encoding of $M^{(\beta)} = \frac{1}{4}\frac{1}{1+\exp(\beta h)}$ by $C$-$U_{M^{(\beta)}}$. 
The implementation of this block-encoding requires
\begin{equation}
    \begin{cases}
        \Theta\big( \frac{\beta^4 s^4}{\varepsilon_{\text{PA}}} \big), & \text{if }\frac{\beta s}{2\pi}\geq 1, \\
        \Theta\big( \frac{\beta^2 s^2}{\varepsilon_{\text{PA}}} \big), & \text{if }\frac{\beta s}{2\pi} < 1,
    \end{cases}
\end{equation}
calls to oracles from the oracle tuple $\mathcal{O}_h$, 
\begin{align}
    O&\big( sn+n_a+\log^{5/2}(16s^{9}\beta^{8}/(\varepsilon_{\text{PA}}^{2}\varepsilon_{p(h)}^2)) \big)
\end{align}
ancillary qubits, and 
\begin{align}
    O&\big( n+(n+4)\beta^{4}s^{4}/\varepsilon_{\text{PA}}+\log^{5/2}(16s^{9}\beta^{8}/(\varepsilon_{\text{PA}}^{2}\varepsilon_{p(h)}^2)) \big)
\end{align}
additional one-qubit and two-qubit gates. To implement this block-encoding, an additional classical computing time of $\text{\normalfont{poly}}\big(\beta^{4}s^{4}/\varepsilon_{\text{PA}},\log(1/\delta)\big)$ is required. 
\end{theorem}

\begin{proof}
It follows from Proposition \ref{prop:gilyenUh} (from \cite{Gilyen_2019}) that with $O(1)$ calls to the oracle tuple $\mathcal{O}_h$, one can construct a $(s,n+3,\varepsilon_{\text{BE}_h})$-block-encoding $U_h$ of $h$ and its controlled version. For a given $\varepsilon_{\text{BE}_h}$, the required number of ancillary qubits and (additional) elementary gates are given in Proposition \ref{prop:gilyenUh}. 

Let $p_{d}(x)$ denote the degree-$d$ polynomial approximation of the function $\frac{1}{4}\frac{1}{1+\exp(\beta s x)}$ as in Lemma \ref{lem:fd_poly_apx}. It follows from Lemma~\ref{lem:fd_poly_apx} that one can efficiently construct $p_{d}$ such that 
\begin{align}
   \| p_{d}(h/s)-1/4\:M^{(\beta)} \| \leq \begin{cases}
        \frac{3}{d}\big(\frac{\beta s}{\pi}\big)^{4}, & \text{ if }\frac{\beta s}{2\pi}\geq 1, \\
        \frac{10}{d}\big(\frac{\beta s}{\pi}\big)^{2}, & \text{ if }\frac{\beta s}{2\pi} < 1.
    \end{cases}
\end{align} 
Taking $d = \Omega\big(\frac{\beta^4 s^4}{\varepsilon_{\text{PA}}}\big)$ if $\frac{\beta s}{2\pi}\geq 1$ and $d = \Omega\big(\frac{\beta^2 s^2}{\varepsilon_{\text{PA}}}\big)$ if $\frac{\beta s}{2\pi} < 1$, we achieve $\| p_{d}(h/s)-1/4\: M^{(\beta)} \| \leq \varepsilon_{\text{PA}}$. 

For $\varepsilon_{\text{PA}}<\frac{1}{4}$, we note that $|p_{d}(x)|\leq 1/2$ for $x \in [-1,+1]$. Therefore, we can apply Proposition \ref{prop:gilyenUph} (from \cite{Gilyen_2019}): A $(1,n+5,4d\sqrt{\varepsilon_{\text{BE}_h}/s}+\delta)$-block-encoding of $p_{d}(h/s)$ consists of at most $d$ uses of unitaries $U_{h}$, $U_{h}^{\dagger}$ or controlled-$U_{h}$ and $O((n+4)d)$ elementary gates. In addition, it requires a classical computation with run-time as stated in Proposition \ref{prop:gilyenUph}. We take $\varepsilon_{p(h)} := 4d\sqrt{\varepsilon_{\text{BE}_h}/s}$ so that for a given $\varepsilon_{p(h)}$, we should ensure that $\varepsilon_{\text{BE}_h} = s\varepsilon_{p(h)}^2/(16d^2)$. 

Let the $(1,n+5,\varepsilon_{p(h)} + \delta)$-block-encoding of $p_d(h/s)$ be denoted by $U_{p_d(h/s)}$. We can bound how well the block-encoding of $p_{d}(h/s)$ approximates the block-encoding of $1/4\: M^{(\beta)}$ as
\begin{align}
    \varepsilon_{\text{Tot}} = ||1/4\:M^{(\beta)} - \bra{0}^{\otimes a}\otimes \mathbbm{1}U_{p_{d}(h/s)}\ket{0}^{\otimes a}\otimes \mathbbm{1}|| \leq &\: \nonumber \\
    ||1/4\: M - p_{d}(h/s)|| \:+ \hspace{0.35cm} \nonumber \\ ||p_{d}(h/s) - \bra{0}^{\otimes a}\otimes \mathbbm{1}U_{p_{d}(h/s)}\ket{0}^{\otimes a}\otimes \mathbbm{1}|| \leq \nonumber \\ 
    \varepsilon_{\text{PA}} + \varepsilon_{p(h)} + \delta. 
    \label{eq:totalMerror}
\end{align}

We have thus constructed a $(1,n+5,\varepsilon_{\text{PA}} + \varepsilon_{p(h)} + \delta)$-block-encoding of $1/4\: M^{(\beta)}$. To implement this block-encoding, we require a number of calls to oracles from the tuple $\mathcal{O}_h$, a number of ancillary qubits, and a number of one-qubit and two-qubit gates as in the lemma statement. 
\end{proof}

\noindent
Let us now give the proof of Lemma \ref{lem:fd_poly_apx}.

\begin{proof}
For the proof of this lemma, we will employ Bernstein's theorem (Lemma \ref{theorem:Bernstein}) which bounds the error of Chebyshev approximations. Such a Chebyshev approximation of degree $d$ is of the form $p_{d}(x) = \sum_{k=0}^{d}a_{k}T_{k}(x)$, where $T_{k}(\cos(\theta)) := \cos(k\theta)$ is the degree $k$ Chebyshev polynomial of the first kind. The coefficients $a_{k}$ can be obtained by evaluating 
\begin{equation}
    a_{k} = \frac{2}{\pi}\int_{-1}^{+1}\frac{f(x)T_{k}(x)}{\sqrt{1-x^{2}}}dx,
\end{equation}
with $\frac{2}{\pi}$ replaced by $\frac{1}{\pi}$ for $k=0$. Each $a_{k}$ can be evaluated classically with $\text{poly}(\beta s k)$ resources for $f(x)$ in the lemma statement. 

Note that the function $f(z=x+i y)=\frac{1}{1+\exp(\beta s z)}$ for $\beta s> 0$ is analytic for $|y|\leq \pi/\beta s$. Hence we can pick the ellipse $E_r = \{ \frac{1}{2}(z+z^{-1})\colon |z| = r \}$ with $r = \frac{1}{2}\sqrt{(2\pi/\beta s)^{2}+4}$ on which $f(z)$ is analytic, since within this ellipse $|y|\leq \frac{\pi}{2\beta s}$. We have $|f(z)|\leq C = 1$ for $z\in E_{r}$ since for $|y|\leq \frac{\pi}{2\beta s}$, we have 
    \vspace{-0.7cm}
    \begin{center}
    \begin{align}
        |1+\exp(\beta s z)| 
        \geq &\: |1+\exp(\beta s x)\cos(\beta s y)| \geq 1.
    \end{align}
    \end{center}
   Using Lemma \ref{theorem:Bernstein}, we can thus bound ${\rm max}_{x\in [-1,+1]} |f(x)-p_{d}(x)|$ as
    \begin{equation}
        {\rm max}_{x\in [-1,+1]} |f(x)-p_{d}(x)| \leq \frac{2\big( (\pi/\beta s)^2 + 1 \big)^{-d/2}}{\frac{1}{2}\sqrt{(2\pi/\beta s)^2+4}-1}.
        \label{eq:upperboundPArough}
    \end{equation}
    Let us distinguish between scenario (1) $\beta s\geq 2\pi$ and scenario (2) $\beta s< 2\pi$. For scenario (1), we can bound
    \vspace{-0.6cm}
    \begin{center}
    \begin{align}
        \frac{1}{2}\sqrt{(2\pi/\beta s)^2+4}-1  \geq&\: \frac{1}{12}(2\pi/\beta s)^2.
    \end{align}
    \end{center}
    Furthermore, in both scenarios (1) and (2), we have that 
    \begin{equation}
        \big( (\pi/\beta s)^2 + 1 \big)^{-d/2} \leq 1/\Big((\pi/\beta s)^2 d/2 + 1\Big) \leq 1/\Big((\pi/\beta s)^2 d/2\Big). 
    \end{equation}
    Combining these two facts lead to the following bound in scenario (1)
    \begin{equation}
        {\rm max}_{x\in [-1,+1]} |f(x)-p_{d}(x)| \leq \frac{12}{d}\Big(\frac{\beta s}{\pi}\Big)^{4}.
    \end{equation}

    In scenario (2), we can simply bound the denominator in Eq.~\eqref{eq:upperboundPArough} by
    \begin{equation}
        \frac{1}{2}\sqrt{(2\pi/\beta s)^2+4}-1 \geq \frac{1}{2}\sqrt{5} - 1 \geq 1/10.
    \end{equation}
    Combining this with the upper bound above for the numerator in Eq.~\eqref{eq:upperboundPArough} (which holds in both scenarios), we obtain the following upper bound in scenario (2).
    \begin{equation}
        {\rm max}_{x\in [-1,+1]} |f(x)-p_{d}(x)| \leq \frac{40}{d}\Big(\frac{\beta s}{\pi}\Big)^{2}.
    \end{equation}
\end{proof}

\section{Block-encoding the time-evolved correlation matrix} 
\label{sec:time-evolv} 
In this appendix, we prove Theorem \ref{lem:time_evo} below, which is a generalization of Theorem \ref{lem:time_evo_main} for block-encoding $M(t_1,t_2)$ in Eq.~\eqref{eq:green2}. We will use a result from Ref. \cite{Gilyen_2018_arxiv} on block-encoding $\exp(iht)$ using a block-encoding of $h$. Note that the error of the block-encoding of $M(t_1,t_2)$ in the theorem statement accounts for potential errors in the block-encoding of the initial correlation matrix as well. 

\begin{theorem}
\label{lem:time_evo}
For an $s$-sparse Hamiltonian $h$ on $2^n$ fermionic modes, assume access to the oracle tuple $\mathcal{O}_h$. Also assume access to the $(\alpha,m,\varepsilon_M)$-block-encoding $U_M$ of a correlation matrix $M$ of a fermionic state on $2^n$ modes.  
The $\big(\alpha, 2n+m+10,\varepsilon+\varepsilon_M\big)$-block-encoding $U_{M(t_1,t_2)}$ of 
\begin{align}
M(t_1,t_2)=e^{iht_1}Me^{-iht_2},
\end{align}
can be produced using 
\begin{multline}
    D(\alpha,\varepsilon, t_1,t_2) = O\Big(s(|t_1|+|t_2|) \:+ \\ \log(12\alpha(|t_1|+|t_2|)/(|t_1|\varepsilon))+\log(12\alpha(|t_1|+|t_2|)/(|t_2|\varepsilon)) \Big)
\end{multline}
calls to oracles from the tuple $\mathcal{O}_{h}$, and a single use of the block-encoding $U_{M}$. Moreover, one uses $O\big( (n+3)(s(|t_1|+|t_2|) + \log(2\alpha(|t_1|+|t_2|)/(|t_1|\varepsilon)) + \log(2\alpha(|t_1|+|t_2|)/(|t_2|\varepsilon)) 
+ D(\alpha,\varepsilon, t_1,t_2)(n+\log^{5/2}(2\alpha s^2(|t_1|+|t_2|)/\varepsilon))\big)$ one-qubit and two-qubit gates, and $O\big(n_{a} + \log^{5/2}(2\alpha s^2(|t_1|+|t_2|)/\varepsilon)\big)$ ancillary qubits (with $n_a$ denoting the number of bits with which the entries of $h$ are specified).
\end{theorem}

\begin{proof}
A block-encoding $U_{M(t_1,t_2)}$ of $M(t_1,t_2)$ can be constructed using products of block-encodings $U_{\exp(ith)}$ of $\exp(ith)$ (for times $t_1$ and $-t_2$) and $U_{M}$ of $M$ (where the latter is a $(\alpha,m,\varepsilon_{M})$-block-encoding by assumption).

To construct a block-encoding of $\exp(iht)$, we employ a block-encoding of $h$. It follows from Proposition \ref{prop:gilyenUh} (from \cite{Gilyen_2019}) that with $O(1)$ calls to the oracle tuple $\mathcal{O}_h$, one can construct a $(s,n+3,\varepsilon_{\text{BE}_h})$-block-encoding $U_h$ of $h$ and its controlled version. For a given $\varepsilon_{\text{BE}_h}$, the required number of ancillary qubits and (additional) elementary gates are given in Proposition \ref{prop:gilyenUh}. 

Corollary 62 in \cite{Gilyen_2018_arxiv} states that to implement a $(1,n+5,|2t|\varepsilon_{\text{BE}_h})$-block-encoding of $\exp(ith)$, one is required to implement $U_h$ or $U_h^{\dagger}$ a total of $6s|t|+9\log\big((6/(|t|\varepsilon_{\text{BE}_h})\big)$ times, and controlled-$U_h$ or controlled-$U_h^{\dagger}$ three times. In addition, one has to use $O\big((n+3)(s|t|+\log((2/\varepsilon_{\text{BE}_h})\big)$ two-qubit gates and $O(1)$ ancillary qubits. So to implement the $(1,n+5,|2t|\varepsilon_{\text{BE}_h})$-block-encoding of $\exp(ith)$, one is required to call $\mathcal{O}_h$ a total of $O\big(s|t| + \log(6/(|t|\varepsilon_{\text{BE}_h})) \big)$ times. 

Using Lemma 30 in \cite{Gilyen_2019}, the block-encoding $U_{M(t_1,t_2)}$ of $M(t_1,t_2)$ can be constructed using the product $U_{M(t_1,t_2)} = (\mathbbm{1}_{n+5+m}\otimes U_{\exp(iht_1)})(\mathbbm{1}_{2n+10}\otimes U_{M})((\mathbbm{1}_{n+5+m}\otimes U_{\exp(-iht_2)})$, such that $U_{M(t_1,t_2)}$ is a $(\alpha,2n+m+10,2\alpha \varepsilon_{\text{BE}_h}(|t_1|+|t_2|) + \varepsilon_{M})$-block-encoding. To implement this product, one is thus required to make 
\begin{multline}
D(\varepsilon_{\text{BE}_h},t_1,t_2) = \\ O\big(s(|t_1|+|t_2|) + \log(6/(|t_1|\varepsilon_{\text{BE}_h}))+\log(6/(|t_2|\varepsilon_{\text{BE}_h})) \big) 
\end{multline}
calls to oracles from the tuple $\mathcal{O}_h$. In addition, one has to use a total of $O\big( (n+3)(s(|t_1|+|t_2|) + \log(1/(|t_1|\varepsilon_{\text{BE}_h}) + \log(1/(|t_2|\varepsilon_{\text{BE}_h}) 
+ D(\varepsilon_{\text{BE}_{h}}, t_1,t_2)(n+\log^{5/2}(s^2/\varepsilon_{\text{BE}_h})\big)$ one-qubit and two-qubit gates, and $O\big(n_{a} + \log^{5/2}(s^2/\varepsilon_{\text{BE}_h})\big)$ ancillary qubits. 

We stress that a controlled version $C$-$U_{M(t_1,t_2)}$ of the block-encoding of $U_{M(t_1,t_2)}$ can be implemented with equivalent resources. 
\end{proof}

\section{Block-encoding the thermal Green's function}
\label{sec:greensfunction}
In this appendix, we prove Theorem \ref{lem:greensfunction}, which is a more detailed version of Theorem \ref{lem:greensfunction_main}. In its proof we will again use Propositions \ref{prop:gilyenUh} and \ref{prop:gilyenUph} on the block-encoding of polynomials of sparse matrices. In addition, we will use Proposition \ref{lemma:gershgorinmain} and Lemma \ref{lem:greens_poly_apx} (of which Lemma \ref{lem:greens_poly_apx_main} is a simplified version) on constructing a polynomial approximation to our desired matrix function $G^{(\delta,\beta,\omega)}$ in Eq.~\eqref{eq:greensomega_main}. 

As was argued in Section \ref{sec:blockencodingmatrixfunctions} using Proposition \ref{lemma:gershgorinmain}, we would like to construct accurate polynomial approximations of $F(sx)$ for $x\in [-1,+1]$. The function to be approximated for block-encoding $G^{(\delta,\beta,\omega)}$ is
\begin{multline}
    g^{(\delta,\beta,\omega)}(x) := \frac{\delta}{8}\bigg[\Big( 1-\frac{1}{1+\exp(\beta sx)} \Big) \frac{1}{i\delta - (sx+\omega)} \\ + \Big( \frac{1}{1+\exp(\beta sx)} \Big) \frac{-1}{i\delta + (sx+\omega)} \bigg].
\label{eq:greensfunction}
\end{multline}
Note that $g^{(\delta,\beta,\omega)}(z)$ ($z\in \mathbb{C}$) has poles at $z = \frac{i\delta - \omega}{s}$ and $z = \frac{-i\delta - \omega}{s}$; the regularization parameter $\delta$ ensures that these poles lie off the real axis. For convenience, we define the functions $g_{1}^{(\delta,\omega)}(z) = 1/(i\delta - (sz+\omega))$ and $g_{2}^{(\delta,\omega)}(z) = -1/(i\delta + (sz+\omega))$. Due to the poles, $|g_{1,2}^{(\delta,\omega)}(x)|$ can still grow as $1/\delta$. To be able to apply Proposition \ref{prop:gilyenUph} for block-encoding polynomials, we have to ensure that the polynomial that approximates $g^{(\delta,\beta,\omega)}(x)$ has absolute value at most $1/2$ for $x\in [-1,+1]$. That is the reason for including a factor of $\delta/8$ in $g^{(\delta,\beta,\omega)}(x)$ (so that its absolute value is at most $1/4$, and that of its polynomial approximation at most $1/2$ for approximation error less than $1/4$). 

Let us first state the following lemma, the proof of which will be provided at the end of this section, which will be used in the proof of Theorem \ref{lem:greensfunction} (and thus Theorem \ref{lem:greensfunction_main}) on the block encoding of the matrix $G^{(\delta,\beta,\omega)}(h)$. 

\begin{lemma}
\label{lem:greens_poly_apx}
For a function $g^{(\delta,\beta,\omega)}(x)$ as in Eq.~ \eqref{eq:greensfunction} (with $\beta,\delta,s>0$ and $x \in [-1,+1]$), one can efficiently construct a polynomial $p_{d}(x)$ of (even) degree $d$ such that
\vspace{-0.7cm}
\begin{center}
\begin{align}
    & {\rm max}_{x\in [-1,+1]} | g^{(\delta,\beta,\omega)}(x)-p_{d}(x)| \nonumber \\ &\:\leq \begin{cases}
        \frac{12}{d}\big(\frac{\beta s}{\pi}\big)^{4}, & \text{ if }\frac{\beta s}{2\pi}\geq 1, \\
        \frac{40}{d}\big(\frac{\beta s}{\pi}\big)^{2}, & \text{ if }\frac{\beta s}{2\pi} < 1.
        \end{cases}
        \nonumber \\&\:\hspace{0.4cm}+
        \begin{cases}
        \frac{128}{d}\big(\frac{s}{\delta}\big)^{4}, & \text{ if }\frac{2s}{\delta}\geq 1, \\
        \frac{32}{d}\big(\frac{s}{\delta}\big)^{2}, & \text{ if }\frac{2s}{\delta} < 1.
        \end{cases}
\label{eq:greenspolyapprox}
\end{align}
\end{center}
\end{lemma}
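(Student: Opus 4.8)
The plan is to reduce the statement to the Bernstein-type argument already used in the proof of Lemma~\ref{lem:fd_poly_apx}, applied \emph{separately} to the two kinds of factors appearing in $g^{(\delta,\beta)}$. Writing $n_F(x)=(1+e^{\beta s x})^{-1}$ and using $1-n_F(x)=n_F(-x)$, one has $\tfrac{\delta}{8}\,g^{(\delta,\beta)}(x)=\big(\tfrac{1}{4} n_F(-x)\big)\big(\tfrac{1}{2}\delta\, g_1^{(\delta)}(x)\big)+\big(\tfrac{1}{4} n_F(x)\big)\big(\tfrac{1}{2}\delta\, g_2^{(\delta)}(x)\big)$, with $g_{1,2}^{(\delta)}$ as defined below Eq.~\eqref{eq:greensfunction}. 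First I would polynomial-approximate the four bracketed factors individually, then multiply and add the approximants; the error of the resulting (complex-coefficient) polynomial is then controlled by the triangle inequality together with the bounds $|\tfrac{1}{4} n_F(\pm x)|\le \tfrac{1}{4}$ and $|\tfrac{1}{2}\delta\, g_i^{(\delta)}(x)|\le \tfrac{1}{2}$ on $[-1,+1]$ (the latter since $|\delta/(i\delta\mp(sx+\omega))|\le 1$). The overall prefactor $\tfrac{\delta}{8}$, together with this $\tfrac{1}{4}$/$\tfrac{1}{2}$ split, is chosen precisely so that the final polynomial obeys $|p_d(x)|\le 1/2$ on $[-1,+1]$, as required in the block-encoding step of Lemma~\ref{lem:greensfunction}; the total degree $d$ is the sum of the four degrees, allocated so that $d$ is even.

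For the two Fermi factors this is immediate: $\tfrac{1}{4} n_F(x)=\tfrac{1}{4}\frac{1}{1+\exp(\beta s x)}$ is exactly the function treated by Lemma~\ref{lem:fd_poly_apx} with $c=\beta s$, and $\tfrac{1}{4} n_F(-x)$ is obtained from the same polynomial via $x\mapsto -x$, with unchanged degree and error. This contributes the first case block of the claimed bound, inheriting the $\beta s/(2\pi)\ge 1$ versus $\beta s/(2\pi)<1$ dichotomy.

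For the two rational factors I would rerun the proof of Lemma~\ref{lem:fd_poly_apx} almost verbatim, only replacing the relevant length scale. The function $g_1^{(\delta)}(z)=\big(i\delta-(sz+\omega)\big)^{-1}$ is analytic except for a simple pole at $z_\ast=(i\delta-\omega)/s$, whose imaginary part is $\delta/s$ \emph{independently of $\omega$}; hence it is analytic on the strip $|\mathrm{Im}\,z|<\delta/s$, which plays the role the strip $|\mathrm{Im}\,z|<\pi/c$ played for $(1+e^{cz})^{-1}$. Choosing the Bernstein ellipse $E_r$ with $r=\tfrac{1}{2}\sqrt{(2\delta/s)^2+4}$ (so that $E_r\subset\{|\mathrm{Im}\,z|\le\delta/(2s)\}$), one has $|\mathrm{Im}(i\delta-(sz+\omega))|=|\delta-s\,\mathrm{Im}\,z|\ge\delta/2$ on $E_r$, hence $|\delta\,g_1^{(\delta)}(z)|\le 2$ and $|\tfrac{1}{2}\delta\,g_1^{(\delta)}(z)|\le 1$ there. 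Feeding $C=1$ and this $r$ into Eq.~\eqref{eq:upper} and repeating the simplifications from the proof of Lemma~\ref{lem:fd_poly_apx} with $\pi/c$ replaced throughout by $\delta/s$ yields a polynomial approximating $\tfrac{1}{2}\delta\,g_1^{(\delta)}$ with a degree--error trade-off of the form $O\big(\tfrac{1}{d}(s/\delta)^4\big)$ when $s/\delta$ is not small (and $O\big(\tfrac{1}{d}(s/\delta)^2\big)$ otherwise); the same applies to $g_2^{(\delta)}$, whose pole sits at imaginary part $-\delta/s$. This contributes the second case block, with its $2s/\delta\ge 1$ dichotomy.

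It then remains to assemble the pieces: multiply each Fermi approximant by the corresponding rational approximant, add the two products, bound each product error by the standard estimate $|AB-ab|\le|A-a|\,|B|+|a|\,|B-b|$ using the uniform bounds above, and track the numerical prefactors through to obtain the explicit constants $12,40,128,32$; a final line verifies $|p_d(x)|\le 1/2$. I expect the only point requiring genuine care (everything else being routine bookkeeping) to be the $\omega$-dependence: one must check that neither the width of the analyticity strip of $g_{1,2}^{(\delta)}$ nor the bound on $|\delta\,g_{1,2}^{(\delta)}|$ over $E_r$ depends on $\omega$ --- they do not, because $\omega$ only translates the real part of the pole --- since otherwise the required degree $d$ would grow with $|\omega|$ instead of being governed solely by $\beta s$ and $s/\delta$.
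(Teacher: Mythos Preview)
Your proposal is correct and essentially the same as the paper's: decompose $\tfrac{\delta}{8}\, g^{(\delta,\beta)}$ into products of a Fermi--Dirac factor and a rational factor, approximate each separately (Lemma~\ref{lem:fd_poly_apx} for the former, Bernstein's theorem on an ellipse fitting inside the strip $|\mathrm{Im}\,z|<\delta/s$ for the latter, noting this strip is $\omega$-independent), and combine via a product error estimate. The paper differs only in minor bookkeeping --- it uses a single degree-$d/2$ approximant $f^{(\beta)}_{d/2}$ for $f^{(\beta)}$ (so $1-f^{(\beta)}_{d/2}$ automatically approximates $1-f^{(\beta)}$) rather than treating $n_F(\pm x)$ as two separate factors, and it takes the slightly smaller Bernstein ellipse $r=\sqrt{(\delta/(2s))^2+1}$ for $g_{1,2}^{(\delta)}$.
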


\begin{theorem}
\label{lem:greensfunction}
For an $s$-sparse Hamiltonian $h$ on $n$ qubits, assume access to the oracle tuple $\mathcal{O}_{h}$. We denote the controlled $(1,n+5, \varepsilon_{\text{PA}} + \varepsilon_{p(h)} + \delta_{\text{class}})$-block-encoding of $\frac{1}{4}G^{(\delta,\beta,\omega)}(h)$ in Eq.~\eqref{eq:greensomega_main} by $C$-$U_{G^{(\delta,\beta,\omega)}}$. 
The implementation of this block-encoding requires
\begin{equation}
    \begin{cases}
        \Theta\Big( \frac{(\beta s)^4}{\varepsilon_{\text{PA}}} \Big), & \text{if }\frac{\beta s}{2\pi}\geq 1, \\
        \Theta\Big( \frac{(\beta s)^2}{\varepsilon_{\text{PA}}} \Big), & \text{if }\frac{\beta s}{2\pi} < 1.
        \end{cases}+
        \begin{cases}
        \Theta\Big( \frac{s^4}{\delta^{4}\varepsilon_{\text{PA}}} \Big), & \text{if }\frac{2s}{\delta}\geq 1, \\
        \Theta\Big( \frac{s^2}{\delta^{2}\varepsilon_{\text{PA}}} \Big), & \text{if }\frac{2s}{\delta} < 1.
        \end{cases}
\end{equation}
calls to oracles from the oracle tuple $\mathcal{O}_h$,
\begin{align}
    O&\big( sn+n_a+\log^{5/2}\big(16s^9(\beta^{4}+1/\delta^4)^2/(\varepsilon_{\text{PA}}^{2}\varepsilon_{p(h)}^2)\big) \big) 
\end{align}
ancillary qubits, and 
\begin{align}
    O&\big( n+(n+4)(\beta^{4}s^{4}+s^4/\delta^4)/\varepsilon_{\text{PA}}\\ \nonumber &\hspace{1.7cm}+\log^{5/2}\big(16s^9(\beta^{4}+1/\delta^4)^2/(\varepsilon_{\text{PA}}^{2}\varepsilon_{p(h)}^2)\big) \big)
\end{align}
additional one-qubit and two-qubit gates. To implement this block-encoding, an additional classical computing time of $\text{\normalfont{poly}}\big((\beta^{4}s^{4}+s^4/\delta^4)/\varepsilon_{\text{PA}},\log(1/\delta_{\text{class}})\big)$ is required. 
\end{theorem}
\begin{proof}
Like in the proof of Theorem \ref{lem:fermi_dirac}, we employ Proposition \ref{prop:gilyenUh} (from \cite{Gilyen_2018_arxiv}) to construct a $(s,n+3,\varepsilon_{\text{BE}_{h}})$-block-encoding $U_{h}$ of $h$. Using this block encoding, we construct a block encoding of a polynomial approximation of $G^{(\delta,\beta,\omega)}(h)$. Let $p_{d}(x)$ denote the degree-$d$ polynomial approximation of the function $g^{(\delta,\beta,\omega)}(x)$ from Lemma \ref{lem:greens_poly_apx}. It follows from Lemma \ref{lem:greens_poly_apx} that one can efficiently construct $p_{d}$ such that
\begin{equation}
    \bigl\lvert\bigl\lvert p_{d}(h/s) - G^{(\delta,\beta)}(\omega,h) \bigr\rvert\bigr\rvert
\end{equation}
is upper bounded by the RHS of the inequality in Eq.~\eqref{eq:greenspolyapprox}. Hence, taking 
\begin{align}
    & d \leq \begin{cases}
        \Theta\Big( \frac{(\beta s)^4}{\varepsilon_{\text{PA}}} \Big), & \text{if }\frac{\beta s}{2\pi}\geq 1, \\
        \Theta\Big( \frac{(\beta s)^2}{\varepsilon_{\text{PA}}} \Big), & \text{if }\frac{\beta s}{2\pi} < 1.
        \end{cases}+
        \begin{cases}
        \Theta\Big( \frac{s^4}{\delta^{4}\varepsilon_{\text{PA}}} \Big), & \text{if }\frac{2s}{\delta}\geq 1, \\
        \Theta\Big( \frac{s^2}{\delta^{2}\varepsilon_{\text{PA}}} \Big), & \text{if }\frac{2s}{\delta} < 1.
        \end{cases}
\end{align}
we obtain $\bigl\lvert\bigl\lvert p_{d}(h/s) - \delta/8\:G^{(\delta,\beta)}(\omega,h) \bigr\rvert\bigr\rvert \leq \varepsilon_{\text{PA}}$. 

For $\varepsilon_{\text{PA}}\leq \frac{1}{4}$, we note that $|p_{d}(x)|\leq 1/2$ for $x\in [-1,+1]$, allowing us to apply Proposition \ref{prop:gilyenUph} (from \cite{Gilyen_2019}). A $(1,n+5,4d\sqrt{\varepsilon_{\text{BE}_h}/s}+\delta)$-block-encoding of $p_{d}(h/s)$ consists of a circuit with $O((n+4)d)$ one-qubit and two-qubit gates, and at most $d$ calls to unitaries $U_{h}$, $U_{h}^{\dagger}$ or controlled-$U_{h}$. The classical description of this circuit can be classically computed in $O\big(\text{poly}(d,\log(1/\delta_{\text{class}}))\big)$ time. We define $\varepsilon_{p(h)} := 4d\sqrt{\varepsilon_{\text{BE}_h}/s}$ so that for a given $\varepsilon_{p(h)}$, we should ensure that $\varepsilon_{\text{BE}_h} = s\varepsilon_{p(h)}^2/(16d^2)$. 

Let the $(1,n+5,\varepsilon_{p(h)} + \delta_{\text{classical}})$-block-encoding of $p_d(h/s)$ be denoted by $U_{p_d(h/s)}$. Like in the proof of Theorem \ref{lem:fermi_dirac}, we have that $\varepsilon_{\text{Tot}} = \|G^{(\delta,\beta,\omega)}(h) - \bra{0}^{\otimes a}\otimes \mathbbm{1}U_{p_{d}(h/s)}\ket{0}^{\otimes a}\otimes \mathbbm{1}\| \leq \varepsilon_{\text{PA}} + \varepsilon_{p(h)} + \delta_{\text{class}}$. We have thus constructed a $(1,n+5,\varepsilon_{\text{PA}} + \varepsilon_{p(h)} + \delta_{\text{class}})$-block-encoding of $G^{(\delta,\beta,\omega)}(h)$. To implement this block-encoding, we require a number of calls to oracles from the tuple $\mathcal{O}_h$, a number of ancillary qubits, and a number of one-qubit and two-qubit gates as in the lemma statement. 
\end{proof} 

\noindent
Let us now give the proof of Lemma \ref{lem:greens_poly_apx}. 
\begin{proof}
We wish to approximate $g^{(\delta,\beta,\omega)}(x)$ in Eq. \eqref{eq:greenspolyapprox} by a polynomial of degree $d$. Let us first express $g^{(\delta,\beta,\omega)}(x)$ as 
\begin{equation}
    \delta/8\Big((1-f^{(\beta)}(x))g_{1}^{(\delta,\omega)}(x) + f^{(\beta)}(x)g_{2}^{(\delta,\omega)}(x)\Big),
\end{equation}
and its degree-$d$ polynomial approximation $p_{d}(x)$ by 
\begin{equation}
    \delta/8\Big(\big(1-f^{(\beta)}_{d/2}(x)\big)g_{1,d/2}^{(\delta,\omega)}(x) + f^{(\beta)}_{d/2}(x)g_{2,d/2}^{(\delta,\omega)}(x)\Big).
\end{equation}
Note that 
\begin{multline}
    |\delta/8\:g^{(\delta,\beta,\omega)}(x) - p_{d}(x)| \leq \delta/8\Big( |g_{1}^{(\delta,\omega)}(x) - g_{1,d/2}^{(\delta,\omega)}(x)| \\ + |g_{2}^{(\delta,\omega)}(x) - g_{2,d/2}^{(\delta,\omega)}(x)| \Big) + 1/2|f^{(\beta)}(x) - f^{(\beta)}_{d/2}(x)|,
\end{multline}
where we have used that $|g_{1,d/2}^{(\delta,\omega)}(x)|,|g_{2,d/2}^{(\delta,\omega)}(x)|\leq 2/\delta$ for sufficiently large $d$ (note that $|g_{1}^{(\delta,\omega)}(x)|,|g_{2}^{(\delta,\omega)}(x)|\leq 1/\delta$).
Using the bound on $\max_{x\in [-1,+1]}|f^{(\beta)}(x) - f^{(\beta)}_{d/2}(x)|$ from Lemma \ref{lem:fd_poly_apx}, and applying Bernstein's theorem \cite{Bernstein} (i.e., Lemma \ref{theorem:Bernstein}) to the functions $g_{1}^{(\delta,\omega)}(x)$ and $g_{2}^{(\delta,\omega)}(x)$ (with a Bernstein ellipse $E_{r}$ with $r = \sqrt{(\delta/(2s))^2+1}$), we obtain the upper bound on $\max_{x\in [-1,+1]}|g^{(\delta,\beta,\omega)}(x) - p_{d}(x)|$ in the lemma statement. 
\end{proof}


\section{Proof of Lemma \ref{lem:obs_extraction}}
\label{sec:cor-estim}

\begin{proof}[Proof of Lemma \ref{lem:obs_extraction}]
By assumption, we have that $\bigl\lvert\bra{i}{A}\ket{j}-\alpha \bra{0}^{\otimes m}\bra{i}  U_{A} \ket{0}^{\otimes m} \ket{j} \bigr\rvert \leq \varepsilon$, where $U_A$ acts on $n+m$ qubits. Let us consider estimating $\bra{0}^{\otimes m}\bra{i}  U_{A} \ket{0}^{\otimes m} \ket{j}$, which can alternatively be expressed as 
\begin{equation}
\bra{0}^{\otimes m} \bra{0}^{\otimes n} (\mathbbm{1} \otimes U_{i}^{\dagger}) \:U_{A}\: (\mathbbm{1} \otimes U_{j}) \ket{0}^{\otimes m} \ket{0}^{\otimes n},
\end{equation}
where $U_{i},U_j$ are depth-1 circuits which prepare bit-strings $i$ and $j$. We denote the estimate of $\bra{0}^{\otimes m}\bra{i}  U_{A} \ket{0}^{\otimes m} \ket{j}$ by $\widetilde{\bra{i}A\ket{j}}$, so that if
$\bigl\lvert \bra{0}^{\otimes m}\bra{i}  U_{A} \ket{0}^{\otimes m} \ket{j} - \widetilde{\bra{i}A\ket{j}} \bigr\rvert \leq \tilde{\varepsilon}$, then $\bigl\lvert \bra{i} A \ket{j} - \alpha \widetilde{\bra{i}A\ket{j}} \bigr\rvert \leq \varepsilon + \alpha \tilde{\varepsilon}$. 

One can obtain the estimate $\widetilde{\bra{i}A\ket{j}}$ by running a series of Hadamard test circuits on $n+m+1$ qubits. These circuits correspond to running
\vspace{-0.6cm}
\begin{center}
\begin{align}
    \big(\mathbbm{1}\otimes [H\:R_{z}(\theta)]_{\text{a}}\big) \big( \mathbbm{1}\otimes \ket{0}\bra{0}_{\text{a}} + U\otimes \ket{1}\bra{1}_{\text{a}} \big)\big(\mathbbm{1}\otimes H_{\text{a}}\big),
\end{align}
\end{center}
where $U = (U_{i}^{\dagger}\otimes \mathbbm{1}) \:U_{A}\: (U_{j}\otimes \mathbbm{1})$, on the state $\ket{0}^{\otimes m}\ket{0}^{\otimes n}\ket{0}_{\text{a}}$ (with the final qubit being an ancillary qubit). The output state of the ancillary qubit is measured a total of $D(\tilde{\varepsilon},\delta)$ times, half of the times for $\theta = 0$ and half of the times for $\theta = \pi/2$. The fractions of output-$0$ measurements for $\theta = 0$ and $\theta = \pi/2$ provide estimates of $\frac{1}{2}+\frac{1}{2}\text{Re}\big( \bra{0}^{\otimes m}\bra{i}  U_{A} \ket{0}^{\otimes m} \ket{j} \big)$ and $\frac{1}{2}-\frac{1}{2}\text{Im}\big( \bra{0}^{\otimes m}\bra{i}  U_{A} \ket{0}^{\otimes m} \ket{j} \big)$, respectively. Using a Chernoff concentration bound, one can show that $\bigl\lvert \widetilde{\bra{i}A\ket{j}} - \bra{0}^{\otimes m}\bra{i}  U_{A} \ket{0}^{\otimes m} \ket{j} \bigr\rvert \leq \tilde{\varepsilon}$ with probability at least $1-\delta$ for $D(\tilde{\varepsilon},\delta) = \Theta\big(\tilde{\varepsilon}^{-2}\log(4\delta^{-1})\big)$. 

One can thus obtain an estimate of $\bra{i}A\ket{j}$ (given by $\alpha \widetilde{\bra{i}A\ket{j}}$) up to error $\varepsilon + \alpha\tilde{\varepsilon}$ with probability $1-\delta$, using $D(\tilde{\varepsilon},\delta) = \Theta\big(\tilde{\varepsilon}^{-2}\log(4\delta^{-1})\big)$ calls to $C\text{-}U_A$. 

\end{proof}

\noindent

\section{BQP-completeness}
\label{sec:bqp}
Here we prove Theorem \ref{thm:time_evo_main} in the main text, using the next Lemma \ref{lem:hopping_overlap} as a small tool:
\begin{proof}[Proof of Theorem \ref{thm:time_evo_main}]
It is straightforward to see that evaluating the matrix element $M_{jj}(t)$ of the correlation matrix $M(t)=e^{iht}M_0 e^{-iht}$ at $t = \text{poly}(n)$ is a problem in BQP, given the promise. By Lemmas~\ref{lem:obs_extraction} and \ref{lem:time_evo}, given access to $\mathcal{O}_{M_0}$ and $\mathcal{O}_h$ as poly$(n)$-sized quantum circuits, the problem is solved with $\text{poly}(n)$ quantum effort.

To show BQP-hardness of our problem, we use the fact that for any promise problem in BQP of problem size $m$, we have the following property \cite{BV}: the problem can be decided by acting on an $k={\rm poly}(m)$-qubit input $\ket{00\ldots 0}$ with (a uniform family of) ${\rm poly}(k) = {\rm poly}(m)$-sized quantum circuits, outputing $1$ (on the first qubit) with probability at least $2/3$ in case ${\rm YES}$, and 1 with probability at most $1/3$ in case ${\rm NO}$. In addition, one can boost the success and failure probabilities $2/3 \rightarrow 1-\exp(-\Theta(k))$ and $1/3 \rightarrow \exp(-\Theta(k))$, by running $k$ instances of the ${\rm poly}(k)$-sized circuits in parallel and taking a majority vote on the first qubit of the output state for each instance (and copying the answer onto an ancillary qubit). The circuit corresponding to this boosted scenario acts on $q = k^2$ qubits, and its success and failure probabilities are respectively $1-\exp(-\Theta(\sqrt{q}))$ and $\exp(-\Theta(\sqrt{q}))$. Let the quantum circuit for this problem with boosted probabilities be 
\begin{align}
    U=W_L \ldots W_1,
\end{align}
where $W_l$ are elementary one-qubit and two-qubit gates and $L={\rm poly}(k) = {\rm poly}(\sqrt{q})$. We represent this decision problem using time-evolution with a sparse circuit Hamiltonian. The circuit Hamiltonian, acting on a $q_{\text{clock}} = \log_{2}(L+1)$-qubit clock space (we assume wlog that $\log_{2}(L+1)$ is an integer) and the $q$-qubit space is given by 
\begin{equation}
\label{eq:hist_Hamiltonian}
    h=\sum^L_{l=1} \left(\ket{l+1}\bra{l}_{\rm clock} \otimes W_l+\ket{l}\bra{l+1}_{\rm clock} \otimes W_l^{\dag}\right).
\end{equation}
We take $n = q_{\text{clock}}+q$ and note that $q_{\text{clock}}<q$ for sufficiently large $q$, so that $n/2\leq q\leq n$.
The matrices $W_{l}$ have at most $4$ non-zero entries in a given row/column. Therefore, $h$ is at most $8$-sparse. Since $\{W_{l}\}_{l=1}^{L}$ are unitary matrices, the entries of $h$ are $O(1)$ in absolute value. 

Consider the evolution $\ket{\psi(t)}=e^{-iht}\ket{1}_{\rm clock}\ket{00 \ldots 0}$ with the Hamiltonian $h$ from Eq.~\eqref{eq:hist_Hamiltonian}. This state can be decomposed as
    \begin{align}
        \ket{\psi(t)}=\sum^{L+1}_{l=1} \alpha_{l,t} \ket{l}_{\rm clock}\otimes \prod^{l-1}_{l'=1} W_{l'} \ket{00\ldots 0}
    \end{align}
    with coefficients $\alpha_{l,t}$ given by
    \begin{align}
        \sum^{L+1}_{l=1} \alpha_{l,t} \ket{l}\equiv e^{-iJt}\ket{1}_{\rm clock}, 
    \end{align}
    where $J$ is a Hamiltonian on the clock register
    \begin{align}
        J=\sum^L_{l=1} \left(\ket{l+1}\bra{l}_{\rm clock} +\ket{l}\bra{l+1}_{\rm clock} \right).
    \end{align}
 Given the encoding of the clock register, one can write the probability of measuring $\ket{L+1}_{\text{clock}}$ on the clock and measuring $\ket{1}$ on the first of the $q$ qubits as
     \begin{align}
         p \equiv
         \bigl\lvert(\langle L+1|_{\rm clock} \otimes \langle 1|_1) \: |\psi(t)\rangle\bigr\rvert^2= \notag \\ \bra{1}_{\rm clock}\bra{00 \ldots 0}e^{iht} M_0 e^{-iht}\ket{1}_{\rm clock}\ket{00 \ldots 0},
         \label{eq:defMtilde}
     \end{align}
     with $M_0=\frac{1}{2^{q_{\text{clock}}+1}}\prod_{j=1}^{q_{\text{clock}}}(\mathbbm{1}-Z_{{\rm clock},j})(\mathbbm{1}-Z_{{\rm qubit},1})$.
    Hence, when the state $U\ket{00\ldots 0}$ outputs 1 on the first qubit with probability at least $1-\exp(-\sqrt{q})$ (YES), it follows through Lemma \ref{lem:hopping_overlap} that $p = \Omega(1/\text{poly}(\sqrt{q})) = \Omega(1/\text{poly}(\sqrt{n}))$.  When the state $U\ket{00\ldots 0}$ outputs 1 on the first qubit with probability at most $\exp(-\sqrt{q})$ (NO), then $p\leq \exp(-\sqrt{q}) \leq \exp(-\sqrt{n/2})$ through Lemma \ref{lem:hopping_overlap}. Now, observe that $M_0$ is a valid and sparse correlation matrix of a multi-particle free-fermionic state on $2^{n}$ modes (in particular, a fraction $\Theta\big(1/\text{poly}(\sqrt{n})\big)$ of the modes is occupied), which is evolved in time $t={\rm poly}(\sqrt{n})$ by the sparse Hamiltonian $h$, after which one wishes to estimate a particular matrix element (labeled, say, by $j=1_{\rm clock},00 \ldots 0$) of the time-evolved matrix, which is the problem stated in Theorem \ref{thm:time_evo_main}. 
    The only thing left to argue is that given the description of $\{W_l\}$, one can implement $\mathcal{O}_h$ in Definition \ref{def:oracles} as a ${\rm poly}(n)$-sized circuit.

    \bigbreak
    \textbf{\textit{Oracle implementation:}}
    The oracle $O_{r}$ from Definition \ref{def:oracles}, acting on $(s+1) (q_{\text{clock}}+q+1)$ qubits, can be implemented as follows. For convenience, we label the first $(q_{\text{clock}}+q+1)$ qubits by $A$ and the last $s$ $(q_{\text{clock}}+q+1)$-qubit registers by $B_1,\ldots,B_s$. For simplicity and wlog, we assume that all $W_l$ are two-qubit gates and all entries of $W_{l}$ in their two-qubit sub-spaces are non-zero. Note that for each $l\in \{1,2,\ldots,L\}$, we have access to the labels $Q_{1}^{(l)}$ and $Q_{2}^{(l)}$ (with $Q_{1}^{(l)} < Q_{2}^{(l)}$) of the qubits on which $W_{l}$ acts non-trivially. The structure of $h$ is such that each row contains $8$ non-zero entries (apart from the rows associated with clock states $\ket{1}_{\text{clock}}$ and $\ket{L+1}_{\text{clock}}$), with a row $\ket{i} = \ket{l}_{\text{clock}}\ket{x}$ having four non-zero entries associated with clock register state $\ket{l-1}_{\text{clock}}$ and four non-zero entries associated with clock register state $\ket{l+1}_{\text{clock}}$. These entries correspond to the entries $\bra{x_{Q_{1}^{l-1}},x_{Q_{2}^{l-1}}}W_{l-1}\ket{y_{1},y_{2}}$ and $\bra{x_{Q_{1}^{l}},x_{Q_{2}^{l}}}W_{l}\ket{y_{1},y_{2}}$ (for $y\in \{0,1\}^{2}$), respectively. The rows associated with clock states $\ket{1}_{\text{clock}}$ and $\ket{L+1}_{\text{clock}}$ are $4$-sparse.

    We take workspace in the form of $2(L+1)$ additional $(q_{\text{clock}}+q)$-qubit registers (initialized in $\ket{00\ldots 0}$), denoted by $C_{1},\ldots,C_{2(L+1)}$. For each $j\in\{1,2,\ldots,L+1\}$, we transform the first $(L+1)$ qubits on registers $C_{2j-1}$ and $C_{2j}$ to $\ket{j}_{\text{clock}}$. Then, for each $j\in\{2,3,\ldots,L\}$ (so excluding $1$ and $L+1$), we flip qubits $q_{\text{clock}}+Q_{1}^{j-1}$ and $q_{\text{clock}}+Q_{2}^{j-1}$ on register $C_{2j-1}$ and qubits $q_{\text{clock}}+Q_{1}^{j}$ and $q_{\text{clock}}+Q_{2}^{j}$ on register $C_{2j}$ to $\ket{1}$. In addition, we flip qubits $q_{\text{clock}}+Q_{1}^{1}$ and $q_{\text{clock}}+Q_{2}^{1}$ on register $C_{2}$ and $q_{\text{clock}}+Q_{1}^{L}$ and $q_{\text{clock}}+Q_{2}^{L}$ on register $C_{2L-1}$ to $\ket{1}$. 

    Controlled on the clock state on register $A$ being $\ket{l}_{\text{clock}}$, we set the clock state to $\ket{l-1}_{\text{clock}}$ on registers $B_{1},\ldots,B_{4}$ (provided that $l>1$) and to $\ket{l+1}_{\text{clock}}$ on register $B_{5},\ldots,B_{8}$ (provided that $l<L+1$). Controlled on the last $q$ qubits of register $A$ being in state $\ket{x}$, we copy $\ket{x}$ onto the final $q$ qubits of $B_{1},\ldots,B_{4}$, excluding qubits $q_{\text{clock}}+Q_{1}^{l-1}$ and $q_{\text{clock}}+Q_{2}^{l-1}$. These latter two qubits are transformed to $\ket{00}$, $\ket{01}$, $\ket{10}$ and $\ket{11}$ on registers $B_{1},\ldots,B_{4}$, respectively. Similarly, we copy $\ket{x}$ onto the final $q$ qubits of $B_{5},\ldots,B_{8}$, apart from qubits $q_{\text{clock}}+Q_{1}^{l}$ and $q_{\text{clock}}+Q_{2}^{l}$, which are respectively transformed to  $\ket{00}$, $\ket{01}$, $\ket{10}$ and $\ket{11}$. These operations make use of the states in the workspace registers $C_{1},\ldots,C_{2(L+1)}$, which are uncomputed at the end of the protocol. In accordance with Definition \ref{def:oracles}, we need to account for rows of $h$ having less than $8$ non-zero entries. Since the rows of $h$ associated with clock states $\ket{1}_{\text{clock}}$ and $\ket{L+1}_{\text{clock}}$ are $4$-sparse, registers $B_{1},\ldots,B_{4}$ are set to resp. $\ket{1}\otimes \ket{5}_{q_{\text{clock}}+q},\ldots,\ket{1}\otimes \ket{8}_{q_{\text{clock}}+q}$ controlled on the $A$ clock state being $\ket{1}_{\text{clock}}$ (after which registers $(B_{1},\ldots,B_{4})$ and $(B_{5},\ldots,B_{8})$ are swapped), and registers $B_{5},\ldots,B_{8}$ are set to resp. $\ket{1}\otimes \ket{5}_{q_{\text{clock}}+q},\ldots,\ket{1}\otimes \ket{8}_{q_{\text{clock}}+q}$ controlled on the $A$ clock state being $\ket{L+1}_{\text{clock}}$. The size of the circuit implementing $O_{r}$ is ${\rm poly}(n)$. 

    To implement oracle $O_a$, let us note that wlog the entries of $W_{l}$ are $0$, $\pm 1/\sqrt{2}$ or $1$, so that the entries can be encoded into a three bit string. By employing additional $\text{poly}(n)$-sized workspace (note that $L=\text{poly}(\sqrt{q})$ and each $W_{l}$ has $16$ entries), the oracle $O_{a}$ can be implemented (by a ${\rm poly}(n)$-sized circuit). 

\end{proof}

{\em Remark}: Like in \cite{babbush2023exponential}, we could have adapted the BQP-verification circuit to output the state $\ket{0}_a \otimes \ket{00\ldots 0}$ (so all qubits back to their initial state and an additional ancilla qubit $a$ to 0) with high probability in the NO case, and with low probability in the YES case. This is done by simply copying the answer of the BQP-circuit onto an additional ancilla qubit $a$ and applying the gates $W_L\ldots W_1$ in reverse on the other qubits. If we use this cleaned-up circuit, it means that we are interested in estimating the probability for a specific output state --- all qubits in $\ket{0}$ and clock state in $\ket{L+1}_{\rm clock}$ --- and this corresponds to estimating an entry of a time-evolved rank-1 projector $\tilde{M}_0$, corresponding to a single-particle state. 
Hence not surprisingly, time-evolution of single-particle states is also BQP-complete, as was shown in Theorem 3 in \cite{babbush2023exponential} (where more work was done to bring $h$ in sign-free form to directly correspond to a sum of kinetic and potential energy). \\
  
 The following lemma, which is used in the proof of Theorem \ref{thm:time_evo_main}, mainly follows the approach of \cite{babbush2023exponential}. Instead of employing this lemma, one could also adapt the coefficients in the hopping Hamiltonian $h$ in Eq.~\eqref{eq:hist_Hamiltonian} to allow for a perfect 1D state transfer from $\ket{1}_{\rm clock}\rightarrow \ket{L+1}_{\rm clock}$, using an idea first suggested by Peres \cite{peres}, see also \cite{babbush2023exponential}: such adaptation requires extra ancilla qubit overhead in realizing the time-dynamics of $h$, hence we omit it.

\begin{lemma}
\label{lem:hopping_overlap}
    For a Hamiltonian $J=\sum^{L}_{l=1} (\ket{l}\bra{l+1}+\ket{l+1}\bra{l})$ on a $(L+1)$-dim Hilbert space with basis states $\ket{l},~l\in\{1,\ldots, L+1\}$, there exists a $t=O(L^2 \log L)$ such that
    \begin{align}
        |\bra{L+1} e^{-iJt} \ket{1}|= \Omega(1/\sqrt{L}).
    \end{align}
\end{lemma}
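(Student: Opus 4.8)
The plan is to diagonalize $J$ explicitly and then run a time-averaging argument. First I would recognize $J$ as the adjacency matrix of the path graph on $L+1$ vertices, whose eigenvalues are $\lambda_k=2\cos\theta_k$ with $\theta_k=\pi k/(L+2)$ and whose orthonormal eigenvectors $\ket{v_k}$ have components $\langle l|v_k\rangle=\sqrt{2/(L+2)}\,\sin(\theta_k l)$, for $k=1,\dots,L+1$; this is verified from the three-term recursion, and the $\lambda_k$ are pairwise distinct since $\cos$ is strictly monotone on $(0,\pi)$. Writing $f(t):=\bra{L+1}e^{-iJt}\ket{1}=\sum_k c_k e^{-i\lambda_k t}$ with $c_k=\langle L+1|v_k\rangle\langle v_k|1\rangle$, and using $\sin(\theta_k(L+1))=-(-1)^k\sin\theta_k$, one obtains the real closed form $c_k=-(-1)^k\,\tfrac{2}{L+2}\sin^2\theta_k$.

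Next I would compute the infinite-time average. Because the $\lambda_k$ are distinct, $\tfrac1T\int_0^T|f(t)|^2\,dt\to\sum_k c_k^2=\tfrac{4}{(L+2)^2}\sum_{k=1}^{L+1}\sin^4\theta_k$ as $T\to\infty$. Using $\sin^4 x=\tfrac38-\tfrac12\cos 2x+\tfrac18\cos 4x$ and the identity $\sum_{k=1}^{L+1}\cos(2\pi j k/(L+2))=-1$ for $j=1,2$ (partial sums of roots of unity), this evaluates exactly to $\bar p:=\tfrac{3}{2(L+2)}=\Theta(1/L)$. Thus $|f|^2$ already has the target magnitude $\Theta(1/L)$ on average; the work is to localize a near-average value in a short window.

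To pass to finite time, expand $|f(t)|^2=\sum_{k,k'}c_k c_{k'}e^{i(\lambda_{k'}-\lambda_k)t}$ and integrate over $[0,T]$: the diagonal contributes $\bar p$ and each off-diagonal term has modulus at most $2/(|\lambda_k-\lambda_{k'}|T)$, so
\begin{equation}
\frac1T\int_0^T |f(t)|^2\,dt\;\ge\;\bar p-\frac{2S}{T},\qquad S:=\sum_{k\neq k'}\frac{|c_k|\,|c_{k'}|}{|\lambda_k-\lambda_{k'}|}.
\end{equation}
Choosing $T=\lceil 4S/\bar p\rceil$ forces the left side above $\bar p/2$, and since $|f|^2\ge 0$ there is then some $t^\ast\in[0,T]$ with $|f(t^\ast)|^2\ge\bar p/2=\tfrac{3}{4(L+2)}$, i.e.\ $|f(t^\ast)|=\Omega(1/\sqrt L)$. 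It only remains to bound $S$ well enough that $T=O(L^2\log L)$.

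The crux — and the step I expect to be the main obstacle — is bounding $S$. I would use $\lambda_k-\lambda_{k'}=-4\sin\tfrac{\theta_k+\theta_{k'}}{2}\sin\tfrac{\theta_k-\theta_{k'}}{2}$ together with two elementary inequalities: with $\phi_k=\theta_k/2\in(0,\pi/2)$, AM--GM on $\sin(\phi_k+\phi_{k'})=\sin\phi_k\cos\phi_{k'}+\cos\phi_k\sin\phi_{k'}$ gives $\sin\tfrac{\theta_k+\theta_{k'}}{2}\ge\sqrt{\sin\theta_k\sin\theta_{k'}}$, and Jordan's inequality gives $|\sin\tfrac{\theta_k-\theta_{k'}}{2}|\ge|k-k'|/(L+2)$ (valid since $|k-k'|\le L$). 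Substituting $|c_k|=\tfrac{2}{L+2}\sin^2\theta_k$ then yields $\tfrac{|c_k||c_{k'}|}{|\lambda_k-\lambda_{k'}|}\le\tfrac{(\sin\theta_k\sin\theta_{k'})^{3/2}}{(L+2)|k-k'|}\le\tfrac{1}{(L+2)|k-k'|}$, so $S\le\tfrac{2}{L+2}\sum_{1\le k<k'\le L+1}\tfrac{1}{k'-k}=O(\log L)$, hence $T=O(L\log L)\subseteq O(L^2\log L)$. The subtlety this has to navigate is that the spectral gaps $|\lambda_k-\lambda_{k'}|$ shrink to $\Theta(1/L^2)$ near the band edges, so the naive bound $S\le(\#\text{pairs})\cdot\max_k|c_k|^2/(\min\text{gap})$ is only $O(L^2)$ and would give the useless $T=O(L^3)$; the identities above are precisely what captures the fact that the overlap coefficients $c_k\propto\sin^2\theta_k$ vanish quadratically at the same edges and exactly compensate. (A coarser version, keeping only $|c_k|\le 2/(L+2)$ but the refined gap bound $|\lambda_k-\lambda_{k'}|\gtrsim|k-k'|/L^2$, already gives $S=O(L\log L)$ and the stated $O(L^2\log L)$.)
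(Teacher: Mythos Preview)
Your proof is correct and follows essentially the same overall scheme as the paper --- diagonalize $J$, recognize the time-averaged success probability as $\sum_k c_k^2=\tfrac{3}{2(L+2)}$, then control the off-diagonal oscillatory contributions over a finite window --- but your execution of the last step is genuinely different and sharper. The paper bounds only the \emph{minimum} consecutive gap as $\Omega(1/L^2)$ and then invokes an external phase-randomization lemma (a discrete probability distribution over times $t\in\{0,\dots,T\}$) to kill the cross terms, which yields exactly $T=O(L^2\log L)$. You instead average continuously over $[0,T]$, bound each cross term by $2|c_k||c_{k'}|/(|\lambda_k-\lambda_{k'}|T)$, and then use the trigonometric identity $\sin\tfrac{\theta_k+\theta_{k'}}{2}\ge\sqrt{\sin\theta_k\sin\theta_{k'}}$ to exploit the coincidence that small gaps (band edges) occur precisely where $|c_k|\propto\sin^2\theta_k$ is small. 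This gives $S=O(\log L)$ and hence $T=O(L\log L)$, strictly improving on the paper. The trade-off: the paper's argument is shorter because it outsources the dephasing to a cited result, whereas yours is self-contained and more informative about why the bound holds.
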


\begin{proof}
The Hamiltonian $J$ has eigenstates
\begin{equation}
    \ket{\psi_{k}}=\sum_{j=1}^{L+1}\alpha^{(k)}_{j}\ket{j},\text{ with }\alpha_{j}^{(k)} = \sqrt{\frac{2}{L+2}}\sin\Big( \frac{\pi jk}{L+2} \Big),
\end{equation}
and eigenvalues 
\begin{equation}
    \epsilon_{k} = 2\cos\Big( \frac{\pi k}{L+2} \Big),
\end{equation}
with $k=1 \ldots L+1$. We note that the gap between any two eigenvalues is at most $4$. To prove a lower bound on $|\bra{L+1}e^{-iJt}\ket{1}|$, we will derive a lower bound on the gaps $\Delta_{m} := |\epsilon_{m+1}-\epsilon_{m}|$ (for $m=1,2 \ldots L$) between the eigenvalues of $J$:
\begin{align}
    \Delta_{m} = |\epsilon_{m+1}-\epsilon_{m}| \geq \nonumber \\ \frac{\pi}{L+2} \min_{x\in \big[ \frac{m\pi}{L+2},\frac{(m+1)\pi}{L+2} \big]} \Bigl\lvert \frac{d \,2\cos(x)}{dx} \Bigr\rvert \geq \nonumber \\ \frac{2\pi}{L+2} \sin\Big( \frac{\pi}{L+2} \Big)  = \Omega(1/(L+2)^{2}).
\end{align}

Using the eigendecomposition of $J$, we infer that
\begin{equation}
    \bra{L+1}e^{-iJt}\ket{1} = \frac{2}{L+2}\sum_{k=1}^{L+1} e^{-i\epsilon_{k} t}(-1)^{k-1}\sin^{2}\Big( \frac{\pi k}{L+2} \Big),
\end{equation}
so that 
\begin{multline}
    |\bra{L+1}e^{-iJt}\ket{1}|^{2} = \Big( \frac{2}{L+2} \Big)^{2} \times \\ \sum_{k,k' = 1}^{L+1}e^{-i(\epsilon_{k} - \epsilon_{k'})t}(-1)^{k+k'}\sin^{2}\Big( \frac{\pi k}{L+2} \Big)\sin^{2}\Big( \frac{\pi k'}{L+2} \Big).
\end{multline}

To show that there must be a time $t$ for which $|\bra{L+1}e^{-iJt}\ket{1}|^{2} = \Omega(1/L)$, we use the fact that a probabilistically chosen time in a sufficiently large interval will give high success probability \cite{nagaj:phd}, and hence there must exist a specific time which works sufficiently well. More precisely,
for $k\neq k'$, there must exist a probability distribution $\{p(t)\}_{t=0}^{T}\geq 0$, $\sum_{t=0}^{T}p(t) = 1$, such that
\begin{equation}
    \biggl\lvert \sum_{t=0}^{T} p(t) e^{-i(\epsilon_{k} - \epsilon_{k'})t} \biggr\rvert \leq \varepsilon,
\end{equation}
provided that $\Delta = \Omega\big(1/(L+2)^{2}\big)$ and $T = O\big((L+2)^2\:\log(1/\varepsilon)\big)$. Examples of probability distributions for which this is true are given in Ref. \cite{phaserandomization}. 

Therefore, for those $\{p(t)\}$'s we have that 
\begin{align}
    \biggl\lvert \sum_{k\neq k'}\sum_{t=0}^{T} p(t)e^{-i(\epsilon_{k} - \epsilon_{k'})t}(-1)^{k+k'} \times &\: \nonumber \\ \sin^{2}\Big( \frac{\pi k}{L+2} \Big)&\: \sin^{2}\Big( \frac{\pi k'}{L+2} \Big) \biggr\rvert \leq \nonumber \\ \varepsilon \sum_{k\neq k'} \sin^{2}\Big( \frac{\pi k}{L+2} & \Big) \sin^{2}\Big( \frac{\pi k'}{L+2} \Big) = \nonumber \\ \varepsilon \Big( \frac{(L+2)^2}{4} - \frac{3(L+2)}{8} & \Big) \: \leq \: \varepsilon \frac{(L+2)^2}{4},
\end{align}
where the equality follows from direct computation. We thus conclude that 
\begin{multline}
    \biggl\lvert \sum_{t=0}^{T} p(t)|\bra{L+1}e^{-iJt}\ket{1}|^{2} - \\ \sum_{t=0}^{T}p(t)\Big(\frac{2}{L+2}\Big)^2\sum_{k=1}^{L+1}\sin^4\Big( \frac{\pi k}{L+2} \Big)\biggr\rvert \leq \varepsilon. 
\end{multline}
The term $\sum_{t=0}^{T}p(t)\big(\frac{2}{L+2}\big)^2\sum_{k=1}^{L+1}\sin^4\big( \frac{\pi k}{L+2} \big)$ can be evaluated to be $\frac{3}{2(L+2)}$. So choosing, for instance, $\varepsilon = \frac{1}{2(L+2)}$, we know that $\sum_{t=0}^{T} p(t)|\bra{L+1}e^{-iJt}\ket{1}|^{2} = \Omega\big( \frac{1}{L+2} \big)$. For $T = O\big( (L+2)^2\: \log(2(L+2)) \big)$, we conclude that there must be a $t = O(L^2 \log L)$ for which $|\bra{L+1}e^{-iJt}\ket{1}|^{2} = \Omega(1/L)$. 
\end{proof}

\section{Classically estimating entries of the time-evolved correlation matrix on lattice models}
\label{sec:classicalsim}
In this appendix we briefly argue the following. For $t = {\rm poly}(n)$ and assuming classical access to entries $\bra{k}M\ket{l}$ of an initial correlation matrix $M$ for given $(k,l)$, one can obtain entries $M(t)_{ij}$ with $1/\exp(n)$ error with poly$(n)$ classical effort. To see this, note that $\max_{x\in [-1,+1]}|p_K(x) - \exp(itsx)| = O\big( (t/\sqrt{K})^{K+1} \big)$, with $p_K(x)$ a degree-$K$ Taylor approximation. This implies 
\vspace{-0.3cm}
\begin{multline}
    \Bigl\lvert \big(p_{K}(h/s) M p_{K}(-h/s)\big)_{ij} - M(t)_{ij} \Bigr\rvert = \\ O\big( (t/\sqrt{K})^{K+1} \big),
\end{multline}
where we have used that $\|M\|\leq 1$. Note that this error can be bounded by $1/\exp(n)$ for some $K = {\rm poly}(n)$. Using the same reasoning as in the proof of Lemma \ref{lem:classicalsim}, we can obtain $\bra{i}h^{k_1} M h^{k_2}\ket{j}$ for all $k_1,k_2\leq K = {\rm poly}(n)$, giving an estimate of $\big(p_{K}(h/s) M p_{K}(-h/s)\big)_{ij}$. So for sufficiently large $K = {\rm poly}(n)$, we obtain an estimate of $\big(e^{+ith}M_{0}e^{-ith}\big)_{ij}$ with $1/\exp(n)$ error. 

\end{document}